\DeclarePairedDelimiter\floor{\lfloor}{\rfloor}
\newcommand{\urltilde}{\kern -.15em\lower .7ex\hbox{~}\kern .04em}  
\newcommand{\Der}{\mathop{\mathrm{Der}}\nolimits}
\newcommand{\Aut}{\mathop{\mathrm{Aut}}\nolimits}
\theoremstyle{plain}
\newtheorem{theorem}{Theorem}[section] 
\newtheorem{proposition}[theorem]{Proposition}
\newtheorem{lemma}[theorem]{Lemma}
\newtheorem{definition}[theorem]{Definition}
\newtheorem{remark}[theorem]{Remark}
\newtheorem{corollary}[theorem]{Corollary}
\begin{document}

\title{{\itshape Families of Orthogonal Laurent Polynomials, Hyperelliptic Lie algebras and Elliptic Integrals}} 
\author{\name{Ben Cox$^{\rm a}$$^{\ast}$\thanks{$^\ast$Corresponding author. Email: coxbl@cofc.edu}  and Mee Seong Im$^{b}$$^{\ast\ast}$\thanks{$^{\ast\ast}$ Email: meeseong.im@usma.edu}}
\affil{$^{a}$Department of Mathematics, College of Charleston, Charleston, SC 29424, USA; \\ $^{b}$Department of Mathematical Sciences, United States Military Academy, \\ West Point, NY 10996, USA}}

\maketitle 

\begin{abstract}
We describe a family of polynomials discovered via a particular recursion relation, which have connections to Chebyshev polynomials of the first and the second kind, and the polynomial version of Pell's equation. Many of their properties are listed in Section~\ref{section:properties-results}.  
We show that these families of polynomials in the variable $t$ satisfy certain second order linear differential equations that may be of interest to mathematicians in conformal field theory and number theory. 
We also prove that these families of polynomials in the setting of Date-Jimbo-Kashiwara-Miwa algebras when multiplied by a suitable power of $t$ are orthogonal with respect to explicitly-described kernels. Particular cases lead to new identities of elliptic integrals (see Section~\ref{section:Date-Jimbo-Kashiwara-Miwa}). 
\end{abstract}  

\begin{keywords} Hyperelliptic Lie algebras;
Krichever-Novikov algebras; universal central extensions; Date-Jimbo-Kashiwara-Miwa algebras; elliptic integrals; Pell's equation; Chebyshev polynomials
\end{keywords}

\begin{classcode}17B65, 33C47; 14H55; 33E05; 17B05 \end{classcode}

\setcounter{tocdepth}{0} 

%\tableofcontents 

\section{Introduction}\label{section:intro}

Let $R$ be the ring of meromorphic functions on a Riemann surface and with a fixed finite number of poles.
Krichever-Novikov algebras arise as central extensions of Lie algebra of derivations $\Der(R)$,  of loop algebras $\mathfrak{g}\otimes R$, and Lax algebras (see \cite{MR925072}, \cite{MR902293}, \cite{MR998426},  \cite{MR1666274}, \cite{MR1989644}, \cite{MR2451204}, \cite{MR2985911}).  Let $p(t)\in \mathbb C[t]$ be a polynomial with distinct complex roots. 
In this paper, we concentrate on the ring $R_2(p)$, 
where 
$R_m(p)=\mathbb{C}[t^{\pm 1}, u]/\langle  u^m - p(t) \rangle$ is the coordinate ring of a Riemann surface (when $m=2$) with a finite number of punctures.  
The universal central extension of the Lie algebra $\mathcal{R}_m(p)=\Der(R_m(p))$ is called the $m$-th superelliptic Lie algebra associated to $p$ (cf. \cite{cox2014simple}), and is a particular type of Krichever-Novikov algebra.   When $m=2$ and $p$ is separable of degree greater than $4$, the Lie algebras of derivations $\Der(R_2(p))$ and of loop algebras $\mathfrak{g}\otimes R_2(p)$ are hyperelliptic.  

In previous work of the first author with X. Guo, R. Lu and K. Zhao, interesting automorphism groups of $\text{Der}(R)$ appear when $R=\mathbb C[t,(t-a_1)^{-1},\dots, (t-a_n)^{-1}]$.  In particular, the five families of groups studied by Klein ($C_n$, $D_n$, $A_4$, $S_4$ and $A_5$) are precisely the only groups that appear as automorphism groups of derivations of $R$ (cf. \cite{MR3211093}). Automorphism groups play an important role in the study of conformal field theory.  For example, the monster simple group is known to appear as the automorphism group of a particular vertex operator algebra (cf. \cite{MR996026}, \cite{MR849120}, \cite{MR781381}), and other finite sporadic simple groups make their appearance as conjectured automorphism groups in Umbral Moonshine. 

In this paper, we study families of polynomials that arise in the description of the groups of units of the automorphism group $\Aut(\mathcal{R}_2(p))$ of the Lie algebra of derivations of $R_2(p)$. The original motivation for describing these units is to help one determine the automorphism ring and consequently, the classification problem of the rings $R_2(p)$, and thus Lie algebras $\mathcal{R}_2(p)$. Such invertible elements have deep and important connections to the solutions of the polynomial Pell equation 
$$
f^2-g^2p=1, \quad f,g\in \mathbb C[t]
$$ that appear in number theory (cf. \cite{cox2014simple} Lemma 11, \cite{MR2183270}). 
 In the case when $p=t^2-1$, the pairs of Chebyshev polynomials $T_n(t)$ and $U_n(t)$ of the first kind and of the second kind, respectively, provide us with the solutions to the particular Pell equation 
 \begin{equation}
 T_n(t)^2-U_{n-1}(t)^2(t^2-1)=1
 \end{equation}
(cf. \cite{MR1937591}),  and thus give a complete description of the group of units.  
  
We would like to give another motivation for the study of such rings and their associated Lie algebras $\text{Der}(\mathcal R_{2}(p))$ and $\mathfrak g\otimes R_2(p)$.  Date-Jimbo-Kashiwara-Miwa studied integrable systems arising from Landau-Lifshitz differential equation in \cite{MR701334}. 
This differential equation describes time evolution of magnetism in solids: 
\begin{equation}\label{equation:Landau-Lifshitz-differential-equation}
\mathbf{S}_t = \mathbf{S} \times \mathbf{S}_{xx} + \mathbf{S}\times \mathbf{JS}, 
\end{equation}
where 
\[ 
\mathbf{S} = (S_{1}, S_{2}, S_{3}),  
	\hspace{6mm} 
S_{1}^2 + S_{2}^2 + S_{3}^2 = 1,  
	\hspace{6mm} 
\mathbf{J} = 
	\begin{pmatrix}
		J_1 & 0 & 0 \\
		0 & J_2 & 0 \\ 
 		0 & 0 & J_3 \\ 
	\end{pmatrix}, 
	\hspace{6mm}
	J_i \in \mathbb{C}. 
\] 
 The Landau-Lifshitz (LL) differential equation has been integrated in \cite{Borovik}. 
Note that a Lax pair for the Landau-Lifshitz differential equation was found, and its elliptic automorphic Lie algebras were introduced in \cite{sklyanin-complete-integrability}.  Here 
\begin{equation}
\frac{\partial W}{\partial x_1}=LW,\quad \frac{\partial W}{\partial x_2}=MW
\end{equation}
where $x_1=x,x_2=-\imath t$ and 
\begin{align*}
L&:=\sum_j=1^3z_jS_j\sigma_j, \\
M&:=\imath \sum_{j,k,l=1}^3z_j\sigma_jS_kS_l\epsilon^{jkl}+2z_1z_2z_3\sum_{j=1}^3z_j^{-1}S_j\sigma_j.
\end{align*}
and the $\sigma_j$ are the usual Pauli spin matrices.  The coordinates $(z_1,z_2,z_3)$ are the coordinates of an elliptic curve $z_i^2-z_j^2=\frac{1}{4}(J_i-J_j)$ where $1\leq i,j\leq 3$.  
Besides this the corresponding Riemann-Hilbert problem for the LL equation was developed in
\cite{Mikhailov-Landau-Lifshitz-eqn}
and in \cite{Rodin-Riemann-boundary-problem}.
The resulting LL hierarchy and its appearance in the study of the asymmetric chiral field was also treated in
\cite{Holod-hidden-symmetry}.

From an unramified covering sending $(z_1,z_2,z_3)$ to $(u=4z_1z_2,t=2z_3)$, Date-Jimbo-Kashiwara-Miwa introduced \color{black}  in \cite{MR823315} the infinite-dimensional Lie algebra $\widehat{\mathfrak{sl}(R_2(p))}$ which is a one-dimensional central extension of 
\begin{equation} 
\mathfrak{g}\otimes \mathbb{C}[t^{\pm1}, u: u^2 = (t^2-b^2)(t^2-c^2)],  \label{DJKMalgebra}
\end{equation}
where $b,c\in \mathbb{C}$, $b\not= \pm c$, and $\mathfrak{g}$ is a simple finite-dimensional Lie algebra. This central extension acts on the solutions of~\eqref{equation:Landau-Lifshitz-differential-equation} 
as infinitesimal B\"acklund transformations, which is the motivation for calling these algebras $DJKM$-algebras. One should note that they are particular examples of Krichever-Novikov algebras. 
After a suitable change of variables, we reduce the study of this algebra to $R_2(p)$, where 
\begin{equation}\label{eq:polynomial-DJKM}
p(t)= p_{\beta}(t) = \dfrac{t^4-2\beta t^2+1}{\beta^2-1}, \hspace{4mm} \beta \not=\pm 1. 
\end{equation} 
As the main example in this paper focuses on a description of units in ring $R_2(p)$ for the polynomial $p$ above (and other separable polynomials $p$), we will not completely review how certain nonclassical orthogonal polynomials appear in the study of the universal central extension of the Lie algebra $\text{Der}(R_2(p))$ and the loop algebra $\mathfrak{sl}_2\otimes R_2(p)$ for the polynomial in  \eqref{eq:polynomial-DJKM}. Thus, we will be content with noting that interesting families such as associated Legendre, associated Jacobi, ultraspherical, and Chebyshev polynomials arise in the description of the universal central extension of $\text{Der}(R_2(p))$ (cf. \cite{coxzhao-2015}) and the loop algebra $\mathfrak{sl}_2\otimes R_2(p)$ (cf. \cite{MR3090080}).
 Also it should be noted that particular examples of the associated Jacobi polynomials of Ismail-Wimp make their appearance in satisfying certain fourth order linear differential equations (\cite{MR1669843}). Lastly, we point out that the fourth order differential equation in \cite{MR3090080} also seems to be related to Kaneko-Zagier's work on supersingular $j$-invariants and Atkins polynomials (see \cite{MR1486833}). 

%The Landau-Lifshitz (LL) differential equation has been integrated in \cite{
%Borovik, AE;  "N-soliton solutions of the nonlinear Landau-Lifshitz equation", JETP Letters 28 (1978) 581-584}. 
%Note that a Lax pair for the Landau-Lifshitz differential equation was found, and its elliptic automorphic Lie algebras were introduced in \cite{
%Sklyanin, EK; On the complete integrability of the  equation; LOMI preprints E-3-79 (1979)}.
%Besides this the corresponding Riemann-Hilbert problem for the LL equation was developed in
%\cite{Mikhailov, AV; The Landau-Lifshitz equation and the Riemann boundary problem in a torus, Phys. Lets A 92 (1982) 51-55.}
%and in \cite{
%Rodin, YU L;  The Riemann boundary problem in a torus and the inverse scattering problem for the Landau-Lifshitz equation, Lett. Math. Phys. 7 (1983) 3-8.}.
%The resulting LL hierarchy and its appearance in the study of the asymmetric chiral field was also treated in
%\cite{
%Holod, PI; Hidden symmetry of the Landau-Lifshitz equation, hierarchy of higher equations and the dial equations for an asymmetric chiral field, Theor. Math. Pays.  70 (1987) 11-19}.

One should also consider in this context the Krichever-Novikov equation as developed in the two papers \cite{Krichever-Novikov-algebraic-curves} and \cite{Krichever-Novikov-holomorphic-bundles-nonlinear-equations}.  
A Lax pair \'a la Sklyanin, in terms of the elliptic automorphic Lie algebras, was studied in \cite{Guil-Manas}.
A symmetry analysis of the Krichever-Novikov equation was developed in \cite{Svinolupov-Sokolov-Yamilov}. \color{black}

A natural generalization of DJKM algebra \eqref{DJKMalgebra} and $\text{Der}(R)$ is where the coordinate ring of the elliptic curve $R= \mathbb{C}[t^{\pm1}, u: u^2 = (t^2-b^2)(t^2-c^2)]$ is replaced by a hyperelliptic curve $R_2(p)= \mathbb{C}[t^{\pm1}, u: u^2 = p(t)]$ where $p(t)$ is a separable polynomial of degree greater than $4$.  
In \cite{cox2014simple}, it was given necessary and sufficient conditions for the Lie algebra $\text{Der}(R_m(p))$ to be simple, which their universal central extensions and their derivation algebras have been explicitly described. 
The authors also studied the isomorphism and automorphism problem for these Lie algebras by describing the group of units of $R_2(p)$ for particular $p$.   In the process, it was realized that the group of units consists of sums of the form $f_n+g_n\sqrt{p}$,  where $f_n$ and $g_n$ are polynomials in $t$, $\sqrt{p}$ is $\pm u$, and $f_n$ and $g_n$ satisfy the polynomial version of Pell's equation $f_n^2-g_n^2p=ct^k$ for some $k\in\mathbb Z$ and $c\in \mathbb C^\times$ (cf. Lemma \ref{lemma:multiplicative-group}).   The hyperelliptic Lie algebras are particular types of Krichever-Novikov (KN) algebras and some of the representation theory of such KN algebras have been reviewed and described in the monographs \cite{MR2985911} and \cite{MR2451204}.\color{black}

In this paper, we use the generating series for pairs of polynomials $a_n$ and $b_n$ satisfying the polynomial version of Pell's equation to derive a recurrence relation for $a_n$ and $b_n$ where $a_n+b_n\sqrt{p}:=(a_1+b_0\sqrt{p})^n$ (where we sometimes need the hypothesis $a_1^2-b_0^2p=t^{2k}$ for some integer $k$).  We give a description of the solution for this recurrence in \eqref{equation:explicit-form-an}  and \eqref{equation:explicit-form-bn} and use this solution to obtain multitude of analogues that generalize properties of Chebyshev polynomials of the first and second kinds.  In particular, we discover analogues of a result used in $2$-dimensional potential theory and multipole expansion, an analogue of Tur\'an's inequality, and closed form formula for products of $a_n$'s and $b_n$'s in terms of sums of such polynomials.  This is in addition to finding summation formulae that relate $b_n$ to sums of the $a_n$'s and also a growth formula for the $b_n$'s (cf. Section~\ref{section:properties-results}). 
In Section~\ref{section:new-diff-equations}, under the hypothesis that $a_1^2-b_0^2p=t^{2k}$ for some $k\in\mathbb Z$, we prove a key result that allows one to describe $a_n$ and $b_n$ in terms hypergeometric functions, Jacobi polynomials and Chebyshev polynomials.   In this section one will also find a version of Rodrigue's formulae for the $a_n$ and $b_n$.
The last result in section four one will find second order linear differential equations for which these polynomials $a_n$ and $b_n$ satisfy.   Under the condition that $p$ is the polynomial given in \eqref{eq:polynomial-DJKM}, one can see that one of these differential equations coincides with the one appearing in \cite{coxzhao-2015} (see  \eqref{align:firsttsquaredode}).
Moreover we observe that our differential equations are of Fuchsian type when $p$ is \eqref{eq:polynomial-DJKM}.

Chebyshev polynomials are known to be orthogonal with respect to an appropriate kernel (see \eqref{kernel:firstkind} and \eqref{kernel:secondkind}).   In the setting of $DJKM$-algebras, i.e.,   see \eqref{eq:polynomial-DJKM}, 
there is a three-step recurrence relation \eqref{modifiedbnrecurrence} that resembles those satisfied by orthogonal polynomials.
This suggests that the Laurent polynomials $t^{-n}a_n$ and $t^{-n}b_n$ in the $DJKM$-algebra setting are orthogonal with respect to some kernel.   We prove that this is the case and determine the respective kernels in the last main theorem of our paper, Theorem \ref{secondmaintheorem}. Surprisingly, the orthogonality result obtained also gives
%, in particular cases, (MSI: what do you mean by in particular cases?)
identities of elliptic integrals, which  
is a new development in the theory of elliptic integrals.  The last result we give is a property of the $b_n$'s that is analogous to a property of the Shabat polynomials.

In Section~\ref{section:future-work}, we give suggestions for future work.

\section{Background}\label{section:background}

\subsection{The automorphism group of the derivations of a Riemann surface}\label{subsection:automorphism-Riemann-surface}

In \cite{cox2014simple} where $R_2(p)=\mathbb C[t,t^{-1},u\,|\, u^2=p(t)\}$ is the particular case of where $p(t)$ is a separable polynomial, \color{black} we studied the automorphism group $\Aut(\mathcal{R}_2(p))$ of the derivations of a Riemann surface, 
which are directly related to the units.  This is due to Lemma 3 in \cite{cox2014simple},  where one sees that $\mathcal{R}_m(p)$ and $R_m(p)$ are intricately related by a derivation 
$\mathcal{R}_2(p)= R_2(p)\Delta$, where $\Delta=p'(t)\frac{\partial}{\partial u}+ 2u\frac{\partial}{\partial t}$. Skryabin in \cite{MR966871} and \cite{MR2035385} 
uses this derivation to relate the automorphism groups $\Aut(\mathcal{R}_2(p))$ and $\Aut(R_2(p))$.  Algebras that are central  extensions of Lie algebras of the form $\text{Der}R_2(p)$ and $\mathfrak g\otimes R_2(p)$ we called hyperelliptic Lie algebras due to the fact that $ u^2=p(t)$ is the equation for a hyperelliptic curve. The case where $p(t)$ is the separable polynomial of degree four in \eqref{DJKMalgebra} giving rise to an elliptic curve, we call these 
algebras DJKM algebras.   All these algebras are Krichever-Novikov type algebras. \color{black}

\subsection{The automorphism group of a Riemann surface with a finite number of punctures}\label{subsection:multiplicative-group-in-Riemann-surface}
There is a very large body of literature that has been devoted to the study of the automorphism group a Riemann surface, but we will not review what is known up to this time. In particular it is an interesting question as to which finite groups can appear when fixing the genus of the Riemann surface.  However a reference for such work is \cite{MR1796706}.   On the other hand a result in \cite{cox2014simple} describes the automorphism groups of particular classes of the algebras $R_2(p)$.  
The authors use the fact that an automorphisms sends an invertible element to an invertible element. \color{black}
Since the generators $t$ and $t^{-1}$ of the algebra appear as units, the image of these invertible elements helps to distinguish the distinct automorphisms, but one first needs to describe the group of units, and to that end, we have: 
 
\begin{lemma}[\cite{cox2014simple}, Lemma 11(a)]\label{lemma:multiplicative-group}
The unit group $R_2^*(p)$ of $R_2(p)$ is of the form 
\begin{equation}
\{t^i: i\in \mathbb{Z}\}\cdot \{ f+ g\sqrt{p}: f,g\in \mathbb{C}[t], f^2-g^2 p=ct^k\mbox{ for some }c\in \mathbb{C}^*, k\in \mathbb{Z}_{\geq 0} \}.
\end{equation}
\end{lemma}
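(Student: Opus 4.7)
The plan is to exploit the Galois involution $\sigma\colon u \mapsto -u$ on the quadratic extension $R_2(p) = \mathbb{C}[t^{\pm 1}][u]/(u^2 - p)$, whose fixed subring is $\mathbb{C}[t^{\pm 1}]$. Every element of $R_2(p)$ has a unique expression $F + Gu$ with $F, G \in \mathbb{C}[t^{\pm 1}]$, and the associated norm
\[
N(F + Gu) = (F + Gu)(F - Gu) = F^2 - G^2 p
\]
lies in $\mathbb{C}[t^{\pm 1}]$. The key reduction I would prove first is that $F + Gu$ is a unit in $R_2(p)$ if and only if $N(F + Gu)$ is a unit in $\mathbb{C}[t^{\pm 1}]$, i.e., of the form $c t^k$ with $c \in \mathbb{C}^*$ and $k \in \mathbb{Z}$. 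The forward direction is immediate since $\sigma$ is a ring automorphism, so the Galois conjugate of a unit is a unit and their product lies in $(\mathbb{C}[t^{\pm 1}])^*$; the converse is explicit, as $N(F + Gu) = c t^k$ yields the two-sided inverse $(c t^k)^{-1}(F - Gu)$.

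Next I would convert this characterization into the asserted multiplicative decomposition. Given a unit $F + Gu$, choose the smallest integer $n \geq 0$ such that $f := t^n F$ and $g := t^n G$ both lie in $\mathbb{C}[t]$; then $F + Gu = t^{-n}(f + g\sqrt{p})$, and multiplying the norm identity through by $t^{2n}$ gives $f^2 - g^2 p = c t^{k+2n}$. Since the left-hand side is a polynomial in $t$ while $c \neq 0$, the exponent $k+2n$ must be nonnegative, placing $f + g\sqrt{p}$ in the claimed second factor. The reverse inclusion is immediate: each $t^i$ is a unit, and the argument of the first paragraph certifies that every $f + g\sqrt{p}$ with $f^2 - g^2 p = c t^{k'}$, $c \in \mathbb{C}^*$, $k' \in \mathbb{Z}_{\geq 0}$, is indeed a unit, so the two inclusions combine to the stated equality.

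The only subtlety worth guarding against is verifying that $\sigma$ is well defined as a ring involution, equivalently that $u^2 - p(t)$ is irreducible in $\mathbb{C}[t^{\pm 1}][u]$; this follows because $p$, having distinct roots and positive degree, is not a square in $\mathbb{C}[t^{\pm 1}]$, so $R_2(p)$ is a genuine degree-two étale extension of $\mathbb{C}[t^{\pm 1}]$. I do not anticipate a major obstacle here: the argument is a standard norm computation for quadratic extensions, and the only real content beyond the classical case is the bookkeeping in the second paragraph that absorbs all negative $t$-powers into the scalar factor $t^i$, ensuring $k \in \mathbb{Z}_{\geq 0}$ in the final form.
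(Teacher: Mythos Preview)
The paper does not actually prove this lemma; it is quoted verbatim from \cite{cox2014simple} (Lemma 11(a)) and used as background input, so there is no proof in the paper to compare against.

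That said, your argument is the natural one and is correct. One small remark: you do not need irreducibility of $u^2 - p$ to know that $\sigma$ is a well-defined ring involution or that every element has a unique expression $F+Gu$; the latter follows simply because $u^2-p$ is monic of degree $2$ in $u$, making $\{1,u\}$ a free $\mathbb{C}[t^{\pm1}]$-basis, and $\sigma$ is well defined because $(-u)^2=p$. Irreducibility (equivalently, $p$ not a square in $\mathbb{C}[t^{\pm1}]$) is what guarantees $R_2(p)$ is a domain, but your norm argument goes through without that hypothesis. Otherwise the reduction to units of $\mathbb{C}[t^{\pm1}]$ via the norm, followed by clearing negative powers of $t$ to land in $\mathbb{C}[t]$, is exactly the expected proof.
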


For a general polynomial $p\in\mathbb C[t]$, 
an explicit description of all pairs of polynomials $f$ and $g$ that satisfy Pell's equation $ f^2-g^2 p=1$ does not exist. 
  As mentioned earlier, Chebyshev polynomials give all of the solutions when $p(t)=t^2-1$, and we recall from \cite{cox2014simple} the explicit description of $f$ and $g$ for the degree $4$ polynomial in 
  \eqref{eq:polynomial-DJKM} 
 with distinct roots (also see Lemma~\ref{prop:hypergeometric-function} with $r=0$).

\subsection{$DJKM$-algebras} \label{subsection:integrable-systems-Landau-Lifshitz-diff-eq}
Date-Jimbo-Kashiwara-Miwa in \cite{MR701334} and \cite{MR823315} studied certain integrable systems arising from the Landau-Lifshitz differential equation.  In the $DJKM$ setting, we will always assume $p$ has the form $\eqref{eq:polynomial-DJKM}$; it is then clear that   $p(t)=q(t)^2-1$, where $q(t)=\frac{t^2-\beta}{\sqrt{\beta^2-1}}$. 
Let 
\[ 
\begin{aligned}
\lambda_0 &=\lambda_0(\beta,t)= \dfrac{t^2-\beta}{\sqrt{\beta^2-1}} + \sqrt{p}, 	&\lambda_1=\lambda_1(\beta,t) = \dfrac{t^2+1}{\sqrt{2(\beta+1)}} + \sqrt{\dfrac{\beta-1}{2}}\sqrt{p},  \\   
\lambda_2 & =\lambda_2(\beta,t)= \dfrac{t^2-1}{\sqrt{2(\beta-1)}} + \sqrt{\dfrac{\beta+1}{2}}\sqrt{p},  &\lambda_3 =\lambda_3(\beta,t)= \dfrac{\beta t^2-1}{\sqrt{\beta^2-1}} + \sqrt{p}.   \\   
\end{aligned}  
\]  
It is easy to check that $\lambda_i\in R_2^*(p)$ for $0\leq i\leq 3$. Since $\lambda_i$'s are related by the following relations: 
\[ 
	\lambda_0 \overline{\lambda_0}=1, \hspace{2mm} 
	\lambda_1 \overline{\lambda_1} = t^2, \hspace{2mm}
	\lambda_2 \overline{\lambda_2}=t^2, \hspace{2mm}
	\lambda_1 \lambda_2 = t^2 \lambda_0, \hspace{2mm}
	\lambda_1 \overline{\lambda_2} = \lambda_3, 
\]  
$\lambda_1$ and $\lambda_2$ (along with $t$) generate the group of units. This gives us the following: 

\begin{theorem}[\cite{cox2014simple}, Theorem 13(a), 13(b)]\label{theorem:multiplicative-group-generators}
The group $R_2^*(p)$ of units is isomorphic to 
$\mathbb{C}^*\times \mathbb{Z} \times \mathbb{Z}  \times \mathbb{Z} $. 
\end{theorem}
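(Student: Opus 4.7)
The plan is to define $\phi\colon\mathbb{C}^{*}\times\mathbb{Z}^{3}\to R_2^{*}(p)$ by $\phi(\alpha,i,j,k)=\alpha\,t^{i}\lambda_1^{j}\lambda_2^{k}$ and to show $\phi$ is a bijection. That $\phi$ is a well-defined homomorphism into $R_2^{*}(p)$ uses only the identities $\lambda_i\bar{\lambda_i}=t^{2}$ listed in the excerpt, which give explicit inverses $\lambda_i^{-1}=t^{-2}\bar{\lambda_i}\in R_2(p)$. The work then splits into injectivity and surjectivity.

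For injectivity I pass to the smooth projective model $C$ of $\spec R_2(p)$. Because $p$ has four distinct roots and $p(0)\neq 0$, the curve $C\colon u^{2}=p(t)$ is an elliptic curve whose affine open $\spec R_2(p)$ omits exactly the two points $P_0^{\pm}$ above $t=0$ and the two points $P_\infty^{\pm}$ above $t=\infty$. Reading off the leading behavior of $\lambda_1,\lambda_2$ as $t\to 0$ and $t\to\infty$ on the two branches of $\sqrt{p}$ (using $\sqrt{p}=\pm(t^{2}-\beta)/\sqrt{\beta^{2}-1}+O(t^{-2})$ at infinity and $\sqrt{p(0)}=\pm 1/\sqrt{\beta^{2}-1}$), one obtains, after a consistent labelling,
\[
\mathrm{div}(t)=P_0^{+}+P_0^{-}-P_\infty^{+}-P_\infty^{-},\quad \mathrm{div}(\lambda_1)=2P_0^{-}-2P_\infty^{+},\quad \mathrm{div}(\lambda_2)=2P_0^{+}-2P_\infty^{+}.
\]
These three vectors are visibly $\mathbb{Z}$-linearly independent inside the free abelian group on $\{P_0^{\pm},P_\infty^{\pm}\}$, so $\phi(\alpha,i,j,k)=1$ forces $i=j=k=0$ and then $\alpha=1$.

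For surjectivity I invoke Lemma~\ref{lemma:multiplicative-group}: it suffices to show every $w=f+g\sqrt{p}$ with $f^{2}-g^{2}p=ct^{k}$ lies in the image of $\phi$. Such a $w$ is a rational function on $C$ with degree-zero divisor supported on the four punctures. The principality of $\mathrm{div}(\lambda_1)$ yields $2(P_0^{-}-P_\infty^{+})\sim 0$, so $P_0^{-}$ is $2$-torsion on $C$ with origin $P_\infty^{+}$; similarly $P_0^{+}$ is $2$-torsion via $\mathrm{div}(\lambda_2)$, and then $\mathrm{div}(t)$ forces $P_\infty^{-}=P_0^{+}+P_0^{-}$ to be $2$-torsion as well. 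As the four punctures are pairwise distinct, they constitute the full $2$-torsion $C[2]\cong(\mathbb{Z}/2\mathbb{Z})^{2}$. A degree-zero divisor $aP_0^{+}+bP_0^{-}+cP_\infty^{-}+dP_\infty^{+}$ is then principal iff its image in $C[2]$ vanishes, i.e.\ iff $a+c\equiv b+c\equiv 0\pmod{2}$; a short linear-algebra computation identifies this lattice with the $\mathbb{Z}$-span of $\mathrm{div}(t),\mathrm{div}(\lambda_1),\mathrm{div}(\lambda_2)$. Multiplying $w$ by a suitable monomial in $t,\lambda_1,\lambda_2$ therefore trivialises its divisor, leaving a nonzero scalar in $\mathbb{C}^{*}$.

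The main obstacle is the surjectivity step, specifically verifying that the three explicit divisors exhaust the lattice of principal divisors supported on the four punctures. This rests on two features of the DJKM polynomial~\eqref{eq:polynomial-DJKM}: the identities $\lambda_i\bar{\lambda_i}=t^{2}$ that force the punctures into $C[2]$, and the elementary abelian structure of $C[2]$, which reduces the principality criterion to the mod-$2$ parity conditions displayed above.
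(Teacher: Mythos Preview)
Your argument is correct. The divisor computations check out (in particular $\lambda_1$ and $\lambda_2$ vanish at \emph{different} points over $t=0$ but have their pole at the \emph{same} point over $t=\infty$, which is what makes the three divisors independent), and the index count---your three divisors span a sublattice of index $4$ in the rank-$3$ degree-zero lattice on the punctures, while the two mod-$2$ parity conditions also cut out an index-$4$ sublattice---closes the surjectivity argument cleanly.

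The paper itself does not give a proof: the theorem is quoted from \cite{cox2014simple}, and the surrounding text only records the multiplicative relations among the $\lambda_i$ and asserts that $t,\lambda_1,\lambda_2$ generate $R_2^*(p)$. Your route is therefore genuinely different from (and more self-contained than) what the paper offers. By passing to the smooth projective model and identifying the four punctures with the full $2$-torsion subgroup of the elliptic curve, you reduce both injectivity and surjectivity to elementary lattice computations in $\mathrm{Div}^0$ supported on $C[2]$. This is a cleaner conceptual explanation of \emph{why} the unit group is free of rank $3$ (namely, the kernel of $\mathrm{Div}^0(\{P_0^\pm,P_\infty^\pm\})\to\mathrm{Pic}^0(C)$ has rank $3$ because the image is the finite group $C[2]$), whereas the paper's sketch, relying on the listed relations and Lemma~\ref{lemma:multiplicative-group}, leaves the ``no further relations'' direction to the cited reference. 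One small remark: your invocation of Lemma~\ref{lemma:multiplicative-group} is not really needed, since once you work on $C$ any unit of $R_2(p)$ automatically has divisor supported on the four punctures; the lemma is rather a consequence of your description than an input to it.
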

Observe that $\imath\lambda_1(\beta,t)= \lambda_2(-\beta,\imath t)$ and $p_\beta(t)=p_{-\beta}(\imath t)$, so we will focus on powers of $\lambda_2$. Next, we define the families $a_n$ and $b_n$ using the equation
$\lambda_2^n=(a_1+b_0\sqrt{p})^n=a_n+b_{n-1}\sqrt{p}$, where $n\geq 0$. Note that we obtain $b_{-1} = 0$ when $n=0$. 
In Section~\ref{subsection:section2-general-separable-p}, we will derive the recurrence relation for $b_n = b_n(\beta,t)$ satisfying 
$a_n+b_{n-1}\sqrt{p}= (a_1+b_0\sqrt{p})^n$, where the first few terms are 
\begin{equation}\label{eq:initial-conditions}
	a_0 = 1, \hspace{5mm} 
	b_0 = \sqrt{\dfrac{\beta+1}{2}}, \hspace{5mm}  
	a_1 = \dfrac{t^2-1}{\sqrt{2(\beta-1)}}, \hspace{5mm}  
	b_1 = 2 a_1b_0. 
\end{equation}
The families $a_n$ and $b_n$ are closely related to Chebyshev polynomials of the first and the second kind, as we will see below.

\subsection{General separable $p$}\label{subsection:section2-general-separable-p}  For most of the paper, we will assume that $p$ is separable, with the occasional exceptions in the examples noted below. 
So suppose we assume that $a_1$ and $b_0$ are polynomials with complex coefficients and the subsequent polynomials $a_n$ and $b_n$ are defined by the equation $a_n+b_{n-1}\sqrt{p}=(a_1+b_0\sqrt{p})^n$.  Then we have 
\begin{align}
\sum_{n\geq 0}a_nz^n + \sum_{n\geq 0} b_{n-1}\sqrt{p}z^n 
	&= \sum_{n\geq 0} (a_1+b_0\sqrt{p})^n z^n\notag \\  
%	&= \dfrac{1}{1-(a_1+b_0\sqrt{p}) z} \notag \\ 
	&= \dfrac{1-a_1z+b_0\sqrt{p} z}{1-2a_1 z+ (a_1^2-b_0^2 p)z^2} .\label{equation:an-bn}
%	&= \dfrac{1-a_1z+b_0\sqrt{p} z}{1-2a_1 z+ t^2 z^2}. \\
\end{align}  
The generating series are then given: 
\[ 
\sum_{n\geq 0}a_n z^n = \dfrac{1-a_1 z}{1-2a_1 z + (a_1^2-b_0^2 p)z^2} \hspace{3mm} \mbox{ and }\hspace{3mm} 
\sum_{n\geq 0}b_{n-1} z^n = \dfrac{b_0 z}{1-2a_1 z + (a_1^2-b_0^2 p) z^2} 
\] 
for the pair $a_n$ and $b_n$ of polynomials.    
The second equation gives us  
\[  
	\begin{aligned} 
		\sum_{n\geq 0} b_{n-1}z^n - \sum_{n\geq 0} 2a_1 b_{n-1}z^{n+1} + \sum_{n\geq 0} (a_1^2-b_0^2 p) b_{n-1}z^{n+2} = b_0 z, 
	\end{aligned} 
\]  
which is equivalent to
\[ 
	\begin{aligned}
		\sum_{n\geq 2} b_{n-1}z^n - \sum_{n\geq 1} 2a_1 b_{n-1}z^{n+1} + \sum_{n\geq 0}(a_1^2-b_0^2 p) b_{n-1}z^{n+2}=0,
	\end{aligned}
\] 
or 
\[
\begin{aligned} 
	\sum_{n\geq 0} b_{n+1} z^{n+2} - \sum_{n\geq 0} 2a_1 b_{n}z^{n+2} + \sum_{n\geq 0}(a_1^2-b_0^2 p)  b_{n-1}z^{n+2}=0. 
\end{aligned}
\] 
So we have the recurrence relation
\[ 
	b_{n+1} - 2a_1 b_{n} + (a_1^2-b_0^2 p)  b_{n-1} = 0  \mbox{ for all } n\geq 0, 
\] 
where we assume $b_{-1}=0$.  The $a_n$'s have a similar recurrence relation:
\[ 
	a_{n+2} - 2a_1 a_{n+1} + (a_1^2-b_0^2 p)  a_{n} = 0  \mbox{ for all } n\geq 0, 
\] 
but with the initial condition $a_0=1$.

\subsection{Some second order linear differential equations}

\subsubsection{Chebyshev polynomials} Recall Chebyshev polynomials of the first kind $T_n(t)$ and of the second kind $U_n(t)$, which are defined recursively and satisfy the differential equations
  \begin{align*}
  (1-t^2)y''-t y'+n^2y=0 
  \hspace{3mm} \mbox{ and }  
  \hspace{3mm} 
  (1-t^2 )y''-3t y'+n(n+2)y=0, 
  \end{align*} 
  respectively.  
  They also satisfy the polynomial Pell equation:
  \[ 
T_n(t)^2-(x^2-1)U_{n-1}(t)^2=1.
\] 

\subsubsection{$DKJM$ setting: differential equation for the family $b_n$ of the second kind}\label{subsection:differential-equation-bn}  
We now return to the setting of  
\[ 
p(t) = p_{\beta}(t)=\dfrac{t^4-2\beta t^2+1}{\beta^2-1}, \mbox{ where } \beta\not= \pm 1.
\]

A rather tedious ad-hoc derivation obtained in \cite{cox2014simple} of the second order linear differential equation satisfied by $b_n$'s (in the $DJKM$-setting) gives us
\begin{align}\label{align:firsttsquaredode}
0 &= t(t^2+1)(t^4-2\beta t^2+1)y'' \\ 
&- ((2n-3)t^6+t^4(-4\beta n + 2n-5)+t^2(4\beta-4\beta n+2n+3)+2n+1)y' \notag \\ 
&- 2(2nt^5+nt^3(\beta + (\beta+1)n+5)+nt(-\beta+(\beta+1)n+1)) y.\notag  
\end{align}
This ad-hoc derivation did not yield a low degree linear differential equation satisfied by the $a_n$'s in the setting of $DJKM$-algebras.   However, in Section~\ref{section:new-diff-equations}, we find a more general second order linear differential equation for which the $a_n$'s satisfy.  
% Explain the role that $r$ plays in an and bn. 

\section{Consequences of the recurrence relation}\label{section:properties-results} 
% 
% 
%
%Derive the second order linear differential equation satisfied by $a_n=a_n(\beta,t)$ and $b_n=b_n(\beta,t)$, defined by the equation 
%\[ 
%a_n + b_{n-1}\sqrt{p} = (a_1+b_0\sqrt{p})^n \mbox{ for } n \geq 1.
%\]  
%Deriving the first few terms, we get 
%\[ 
%	a_0=1, 						\hspace{8mm}
%	b_0 = \sqrt{\dfrac{\beta+1}{2}},  		\hspace{8mm}
%	a_1 = \dfrac{t^2-1}{\sqrt{2(\beta-1)}},  	\hspace{8mm}
%	b_1 = 2a_1b_0, 
%\] 
%Note that when $\alpha=1$, we get $w_n$'s.  
%% Since $a_1^2-b_0^2p=t^2$, 
%%  
%It follows: 
%\begin{equation}\label{equation:an-bn}
%\sum_{n\geq 0} a_n z^n + \sum_{n\geq 1} b_{n-1}z^n \sqrt{p} 
%	= \sum_{n\geq 0} (a_1 + b_0\sqrt{p})^n z^n 
%	= \dfrac{1}{1-(a_1+b_0\sqrt{p})z} 
%  	= \dfrac{1-a_1z+b_0z\sqrt{p}}{1-2a_1z+(a_1^2-b_0^2 p) z^2}.  \\ 
%\end{equation} 
\subsubsection{Separable $p$}

%Now back to the separable setting.
One can easily show by induction that the solution to our recursion relation 
\begin{equation}\label{bnrecursion}
	b_n=2a_1b_{n-1}-(a_1^2-b_0^2p)b_{n-2}  
\end{equation}
is solved by 
\begin{align}\label{equation:explicit-form-bn} 
b_n &= 
	\dfrac{1}{2\sqrt{p}}  
		\left(  
			\left( 
				a_1+ b_0\sqrt{p}
			\right)^{n+1} 
			-
			\left(  
				a_1 - b_0\sqrt{p}
			\right)^{n+1} 
		\right)  
	= b_0 \dfrac{(a_1+b_0\sqrt{p})^{n+1} - (a_1-b_0\sqrt{p})^{n+1}}{(a_1+b_0\sqrt{p})-(a_1-b_0\sqrt{p})},
%b_n &=  2^{-\frac{n+3}{2}}\sqrt{\beta +1} 
%	\left( \dfrac{t^2 - 1-\sqrt{t^4-2\beta t^2+1}}{\sqrt{\beta -1} }   \right)^n
%-(t^2-1)  \dfrac{2^{-\frac{n+3}{2}}\sqrt{\beta +1} 
%	\left( \dfrac{t^2 - 1-\sqrt{t^4-2\beta t^2+1}}{\sqrt{\beta -1} }   \right)^n
%}{\sqrt{t^4-2\beta t^2 +1} } \\
%&\hspace{2mm} +  2^{-\frac{n+3}{2}}\sqrt{\beta +1} 
%	\left( \dfrac{t^2 - 1 + \sqrt{t^4-2\beta t^2+1}}{\sqrt{\beta -1} }   \right)^n 
%+ (t^2-1) \dfrac{2^{-\frac{n+3}{2}}\sqrt{\beta +1} 
%	\left( \dfrac{t^2 - 1 + \sqrt{t^4-2\beta t^2+1}}{\sqrt{\beta -1} }   \right)^n
%}{\sqrt{t^4-2\beta t^2+1} }.  \\ 
\end{align}
for $n\geq -1$.

Since 
\[
a_n = (a_1+b_0\sqrt{p})^n - b_{n-1}\sqrt{p}, 
\] 
we also obtain the following formula for the family of polynomials of the first kind: 
\begin{equation}\label{equation:explicit-form-an} 
a_n 	= \dfrac{1}{2}	
		\left(  
			(a_1+b_0\sqrt{p})^n + (a_1-b_0\sqrt{p})^n 
		\right) 
	= 	a_1 \dfrac{(a_1+b_0\sqrt{p})^n + (a_1-b_0\sqrt{p})^n}{(a_1+b_0\sqrt{p})+(a_1-b_0\sqrt{p})}. 
	\color{black}
\end{equation}

The generating function in \eqref{equation:an-bn} gives us the summations in Proposition~\ref{proposition:summations-an-bn}. 

\begin{proposition}\label{proposition:summations-an-bn}
The following holds for the polynomials $a_n$ and $b_n$:
\begin{enumerate}
\item \label{item:ordinary-generating-function}  
\begin{align*}
\sum_{n=0}^{\infty} a_n x^n &=  
	\dfrac{ 1-a_1x 
	}{
		\left(
			1 - x 	\left( 
				a_1+b_0\sqrt{p}	
				\right) 
		\right)
		\left(  
			1 - x	\left(  
				a_1-b_0\sqrt{p}
				\right) 
		\right) 
	}
	= 
	\dfrac{1-a_1x}{1-2a_1 x+(a_1^2-b_0^2 p) x^2}, \\ 
\sum_{n=0}^{\infty} b_n x^n &= 
	\dfrac{ b_0  
	}{  
		\left(
			1 - x 	\left( 
				a_1+b_0\sqrt{p}	
				\right) 
		\right)
		\left(  
			1 - x	\left(  
				a_1-b_0\sqrt{p}
				\right) 
		\right) 
	} 
	= 
	\dfrac{b_0}{1-2a_1 x+(a_1^2-b_0^2 p) x^2}.  \\ 
\end{align*} 
\item \label{item:sum-of-bn}
We have 
\begin{align*} 
\sum_{n=0}^{\infty} a_n &= 
		\dfrac{1-a_1}{		\left(
			1 -  	\left( 
				a_1+b_0\sqrt{p}	
				\right) 
		\right)
		\left(  
			1 - 	\left(  
				a_1-b_0\sqrt{p}
				\right) 
		\right) }, \\  
\sum_{n=0}^{\infty} b_n &=  
		\dfrac{b_0
			}{
		\left( 
			1 - \left( a_1 + b_0\sqrt{p} \right)
		\right)
		\left( 
			1 - \left( a_1 - b_0\sqrt{p} \right)
		\right) 
		}.  \\ 
\end{align*} 
\item \label{item:2-dimensional-potential-theory} 
Generating functions related to $2$-dimensional potential theory and multipole expansion are:  
\begin{align*} 
\sum_{n=1}^\infty a_n\frac{x^n}{n}
		&= -\dfrac{1}{2} \ln 
		\left(  
			(1-(a_1+ b_0\sqrt{p})x)
			(1-(a_1- b_0\sqrt{p})x)
		\right), 
\end{align*}
and 
\begin{align*}
	\sum_{n=1}^{\infty} b_n \dfrac{x^n}{n} 
		&=  \dfrac{1}{2}b_0 	
			\left(  
				\dfrac{a_1}{b_0\sqrt{p}} \ln 
					\left(  
						\dfrac{1- (a_1-b_0\sqrt{p}) x}{1-(a_1+b_0\sqrt{p})x }
					\right)  
					- 
					\ln 
						\left( 
							1-2a_1x+(a_1^2-b_0^2p)x^2
						\right) 
			\right).  \\ 
\end{align*}
\item \label{item:exponential-generating-function}
We have exponential generating functions: 
\begin{align*}
\sum_{n=0}^\infty a_n\frac{x^n}{n!}&=e^{a_1x}\cosh (b_0\sqrt{p}x), \\ 
\sum_{n=0}^{\infty} b_n \dfrac{x^n}{n!} &= 
	b_0e^{a_1 x} 
				\left(\cosh\left( b_0\sqrt{p} x \right)
					+  
					\frac{a_1}{b_0\sqrt{p}}
					\sinh\left( b_0\sqrt{p} x 
						\right)
				\right).  \\ 
\end{align*}
\item \label{item:Turan-inequality}
We have an analogous form of Tur\'an's inequality: 
for $p$ not the square of a polynomial, one has 
\[ 
a_n^2-a_{n-1}a_{n+1} =-pb_0^2\left( a_1^2-b_0^2p
	\right)^{n-1} \not=0, 
\] 
and 
\begin{align*}
b_n^2-b_{n-1}b_{n+1} =b_0^2	\left( a_1+b_0\sqrt{p}
	\right)^n
	\left(a_1-b_0\sqrt{p}		
	\right)^n =b_0^2	\left( a_1^2-b_0^2p	
	\right)^n \not=0.
\end{align*}
\item \label{item:products-of-an-bn}
For $m\geq n$, we give a closed form for products of $a_n$'s and $b_n$'s: 
\begin{align*}
a_ma_n&=\frac{1}{2}\left(a_{m+n} + (a_1^2-b_0^2p)^na_{m-n}\right), 	\\ 
b_ma_n&=\frac{1}{2}\left(b_{m+n} + (a_1^2-b_0^2p)^nb_{m-n}\right),	\\ 
pb_mb_n&=\frac{1}{2}\left(a_{m+n+2}-(a_1^2-b_0^2p)^{n+1}a_{m-n}\right). 	 	\\ 
\end{align*} 
\end{enumerate}  
\end{proposition}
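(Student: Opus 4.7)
The plan is to introduce the shorthand $\alpha := a_1+b_0\sqrt{p}$ and $\gamma := a_1-b_0\sqrt{p}$ (so the two ``roots'' appearing in \eqref{equation:explicit-form-an} and \eqref{equation:explicit-form-bn}), which satisfy $\alpha+\gamma=2a_1$, $\alpha\gamma=a_1^2-b_0^2p$, and $\alpha-\gamma=2b_0\sqrt{p}$. The Binet-type formulas then read
\[
a_n=\tfrac{1}{2}(\alpha^n+\gamma^n), \qquad b_n=\frac{\alpha^{n+1}-\gamma^{n+1}}{\alpha-\gamma}.
\]
With this notation in hand, every claim becomes a short manipulation of two geometric, logarithmic, or exponential series in $\alpha x$ and $\gamma x$, recombined via the elementary identity $(1-\alpha x)(1-\gamma x)=1-2a_1x+(a_1^2-b_0^2p)x^2$.

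For part \eqref{item:ordinary-generating-function}, I would split $\sum a_n x^n = \tfrac{1}{2}\bigl(\frac{1}{1-\alpha x}+\frac{1}{1-\gamma x}\bigr)$ and $\sum b_n x^n = \frac{1}{\alpha-\gamma}\bigl(\frac{\alpha}{1-\alpha x}-\frac{\gamma}{1-\gamma x}\bigr)$; combining over a common denominator, the numerators collapse to $1-a_1x$ and $b_0$ respectively. Part \eqref{item:sum-of-bn} is then just the specialization $x=1$. For part \eqref{item:2-dimensional-potential-theory}, I would use $\sum_{n\ge 1}\frac{(\alpha x)^n}{n}=-\ln(1-\alpha x)$; the $a_n$-identity is immediate by averaging, while the $b_n$-identity requires splitting $\frac{\alpha^{n+1}-\gamma^{n+1}}{\alpha-\gamma}$ into the symmetric part (proportional to $a_1/(b_0\sqrt{p})$ times the difference of logs) and the antisymmetric part (proportional to $\ln((1-\alpha x)(1-\gamma x))$), which matches the stated formula. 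For part \eqref{item:exponential-generating-function}, substituting into $e^{\alpha x}\pm e^{\gamma x}=2e^{a_1x}\{\cosh,\sinh\}(b_0\sqrt{p}\,x)$ yields both identities.

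For the Turán-type relations in \eqref{item:Turan-inequality}, I expand $a_n^2-a_{n-1}a_{n+1}$ as $\tfrac{1}{4}(\alpha^n+\gamma^n)^2-\tfrac{1}{4}(\alpha^{n-1}+\gamma^{n-1})(\alpha^{n+1}+\gamma^{n+1})$; the $\alpha^{2n}$ and $\gamma^{2n}$ terms cancel, leaving $-\tfrac{1}{4}(\alpha\gamma)^{n-1}(\alpha-\gamma)^2=-pb_0^2(a_1^2-b_0^2p)^{n-1}$, and non-vanishing follows because $p$ is not a perfect square so $b_0\sqrt{p}\neq 0$. The $b_n$-identity is handled identically after factoring $\frac{1}{(\alpha-\gamma)^2}$. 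For the product formulas in \eqref{item:products-of-an-bn}, I would expand each product as a sum of four monomials in $\alpha,\gamma$; the ``diagonal'' pair recombines into $a_{m+n}$ or $b_{m+n}$, while the ``cross'' pair factors out $(\alpha\gamma)^n=(a_1^2-b_0^2p)^n$, leaving $a_{m-n}$ or $b_{m-n}$ (using $m\ge n$). The third identity, involving $pb_mb_n$, uses the extra relation $p(\alpha-\gamma)^{-2}=\frac{1}{4b_0^2}$ together with the index shift $\alpha^{m+1}\gamma^{n+1}+\gamma^{m+1}\alpha^{n+1}=(\alpha\gamma)^{n+1}(\alpha^{m-n}+\gamma^{m-n})$.

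I do not expect any deep obstacles: once the Binet-type substitution is made, every identity is a bookkeeping exercise. The only place requiring care is part \eqref{item:2-dimensional-potential-theory} for $b_n$, where one must correctly isolate the ``symmetric'' and ``antisymmetric'' parts of the series expansion so that the coefficients of the two logarithms come out as $\tfrac{a_1}{2b_0\sqrt{p}}\cdot b_0$ and $-\tfrac{1}{2}b_0$ respectively, matching the stated formula. A mild convergence/formal-series remark may be warranted so that the logarithmic and exponential identities are read as identities of formal power series in $x$ with coefficients in $\mathbb{C}[t,\sqrt{p}]$.
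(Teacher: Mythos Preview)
Your proposal is correct and follows essentially the same approach as the paper: both arguments plug the Binet-type formulas \eqref{equation:explicit-form-an}--\eqref{equation:explicit-form-bn} into geometric, logarithmic, and exponential series in $\alpha x$ and $\gamma x$, and both leave part~\eqref{item:products-of-an-bn} as the routine four-term expansion you describe. The only cosmetic difference is in part~\eqref{item:ordinary-generating-function}: the paper reaches the closed form for $\sum b_nx^n$ via the finite geometric identity $\frac{\alpha^{n+1}-\gamma^{n+1}}{\alpha-\gamma}=\sum_{k=0}^n\alpha^k\gamma^{n-k}$ followed by a swap of summations, whereas you go directly through partial fractions---either route is immediate.
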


\begin{proof} 
Equation \ref{item:ordinary-generating-function} follows directly from \eqref{equation:an-bn}.

For \ref{item:2-dimensional-potential-theory}, we have 
\[ 
\begin{aligned} 
\sum_{n=1}^{\infty} a_n \dfrac{x^n}{n} 
	&= \dfrac{1}{2} 
		\sum_{n=1}^{\infty}  
			\left( 
				a_1+ b_0\sqrt{p}
			\right)^{n} \dfrac{x^n}{n}
			+
	      \dfrac{1}{2}
		\sum_{n=1}^{\infty}   
			\left(  
				a_1 - b_0\sqrt{p}
			\right)^{n} \dfrac{x^n}{n} \\ 
%	&= -\dfrac{1}{2} \ln 
%		\left(  
%			1-(a_1+ b_0\sqrt{p})x 
%		\right) 
%		- 
%		\dfrac{1}{2} \ln 
%		\left(  
%			1-(a_1- b_0\sqrt{p})x
%		\right) \\
	&= 	-\dfrac{1}{2} \ln 
		\left(  
			(1-(a_1+ b_0\sqrt{p})x)
			(1-(a_1- b_0\sqrt{p})x)
		\right),   \\ 
		\end{aligned} 
\] 
and 
\[ 
\begin{aligned} 
\sum_{n=1}^{\infty} b_n \dfrac{x^n}{n} 
	&= \dfrac{1}{2\sqrt{p}} 
		\sum_{n=1}^{\infty}  
			\left( 
				a_1+ b_0\sqrt{p}
			\right)^{n+1} \dfrac{x^n}{n}
			- 
	      \dfrac{1}{2\sqrt{p}}
		\sum_{n=1}^{\infty}   
			\left(  
				a_1 - b_0\sqrt{p}
			\right)^{n+1} \dfrac{x^n}{n} \\ 
	&= \dfrac{a_1+ b_0\sqrt{p}}{2\sqrt{p}} 
			\sum_{n=1}^{\infty}  
				\left( 
					a_1+ b_0\sqrt{p}
				\right)^{n} \dfrac{x^n}{n}
				- 
			\dfrac{a_1 - b_0\sqrt{p}}{2\sqrt{p}}
			\sum_{n=1}^{\infty}   
				\left(  
					a_1 - b_0\sqrt{p}
				\right)^{n} 
			\dfrac{x^n}{n} \\ 
	&= -\dfrac{a_1+ b_0\sqrt{p}}{2\sqrt{p}} \ln 
		\left(  
			1-(a_1+ b_0\sqrt{p})x 
		\right) 
		+ 
		\dfrac{a_1- b_0\sqrt{p}}{2\sqrt{p}} \ln 
		\left(  
			1-(a_1- b_0\sqrt{p})x
		\right) \\
	&= 	\dfrac{a_1}{2\sqrt{p}} \ln 
		\left(  
			\dfrac{1-(a_1- b_0\sqrt{p})x}{1-(a_1+ b_0\sqrt{p})x}
		\right)  \\ 
	&\quad -\dfrac{ b_0 }{2 } \ln 
		\left(  
			\left( 1-(a_1+ b_0\sqrt{p})x \right)
			\left(1-(a_1- b_0\sqrt{p})x \right)
		\right).  \\
\end{aligned} 
\] 

Exponential generating functions in \ref{item:exponential-generating-function} follow from 
\[ 
	\begin{aligned}
		\sum_{n=0}^{\infty} a_n \dfrac{x^n}{n!} 
			&= \dfrac{1}{2} \sum_{n=0}^{\infty}  
			\left( a_1+ b_0\sqrt{p}\right)^{n} \dfrac{x^n}{n!}
		+
		\dfrac{1}{2} \sum_{n=0}^{\infty}  
			\left( a_1 - b_0\sqrt{p}\right)^{n} \dfrac{x^n}{n!} \\ 
			&= 
		\dfrac{1}{2} 
		\left( 
			e^{(a_1+b_0\sqrt{p})x}	
			+
			e^{(a_1 - b_0\sqrt{p})x} 
		\right)  \\ 
			&= 
		e^{a_1x}\cosh (b_0\sqrt{p}x), 
	\end{aligned} 
\]  
and 
\[ 
	\begin{aligned}
		\sum_{n=0}^{\infty} b_n \dfrac{x^n}{n!} 
			&= 
		\dfrac{1}{2\sqrt{p}} \sum_{n=0}^{\infty}  
			\left( 
				a_1+ b_0\sqrt{p}
			\right)^{n+1} 
		\dfrac{x^n}{n!}
			-
		\dfrac{1}{2\sqrt{p}} \sum_{n=0}^{\infty}  
			\left(  
				a_1 - b_0\sqrt{p}
			\right)^{n+1} 
		\dfrac{x^n}{n!} \\ 
			&= 
		\dfrac{a_1+ b_0\sqrt{p}}{2\sqrt{p}} \sum_{n=0}^{\infty}  
			\left( 
				a_1+ b_0\sqrt{p}
			\right)^{n} 
		\dfrac{x^n}{n!}
			-
		\dfrac{a_1 - b_0\sqrt{p}}{2\sqrt{p}} \sum_{n=0}^{\infty}  
			\left(  
				a_1 - b_0\sqrt{p}
			\right)^{n} 
		\dfrac{x^n}{n!} \\ 
			&= 
		\dfrac{a_1+ b_0\sqrt{p}}{2\sqrt{p}} 
		e^{(a_1+b_0\sqrt{p})x}	
			-
		\dfrac{a_1 - b_0\sqrt{p}}{2\sqrt{p}}  
		e^{(a_1 - b_0\sqrt{p})x}  \\ 
			&= 
		b_0  
		e^{a_1x}
		\left( 
		\left( 
		e^{b_0\sqrt{p}x}
			+   
		e^{-b_0\sqrt{p}x}
		\right)/2
		\right)
			+
		\dfrac{a_1}{\sqrt{p}} 
		e^{a_1x} 
		\left(  
		\left( 
		e^{ b_0\sqrt{p}x }
			- 
		e^{-b_0\sqrt{p}x}
		\right)/2
		\right) \\ 
		&= e^{a_1 x} 
		\left( 
		b_0\cosh (b_0\sqrt{p}x) + \dfrac{a_1}{\sqrt{p}}\sinh(b_0\sqrt{p}x) 
		\right).   \\ 
	\end{aligned} 
\]

The following proves \ref{item:Turan-inequality}:  
\[   
	\begin{aligned}
		a_n^2  &- a_{n-1}a_{n+1} 
			=\left(\frac{(a_1+b_0\sqrt{p})^{n}+(a_1-b_0\sqrt{p})^{n}}{2 }\right)^2 \\
			&\quad -\frac{\left((a_1+b_0\sqrt{p})^{n-1}+(a_1-b_0\sqrt{p})^{n-1}\right)}{2 }\frac{\left((a_1+b_0\sqrt{p})^{n+1}+(a_1-b_0\sqrt{p})^{n+1}\right)}{2 }
	 \\ 
			&= \frac{(a_1+b_0\sqrt{p})^{2n}+2(a_1+b_0\sqrt{p})^{n}(a_1-b_0\sqrt{p})^{n}+(a_1-b_0\sqrt{p})^{2n}}{4}  \\
			&\quad - \frac{ (a_1+b_0\sqrt{p})^{2n}+\left((a_1+b_0\sqrt{p})^2+(a_1-b_0\sqrt{p})^2\right)(
				a_1^2-b_0^2 p)^{n-1}+(a_1-b_0\sqrt{p})^{2n}}{4}\\ 
			&= \frac{(a_1^2-b_0^2p)^{n} - \left(a_1^2+b_0^2p\right)(a_1^2-b_0^2p)^{n-1}}{2}\\ 
			&=-pb_0^2\left( a_1^2-b_0^2p\right)^{n-1}, 
	\end{aligned}
\] 
and 
\[ 
	\begin{aligned}
		b_n^2 &- b_{n-1}b_{n+1} 
			= 
				\dfrac{1}{4p}   
					\left( 
						\left(  
							\left( 
								a_1+ b_0\sqrt{p}
							\right)^{n+1} 
						-
							\left(  
								a_1 - b_0\sqrt{p}
							\right)^{n+1} 
						\right)^2   \right. \\ 
			&\left. \quad - 
				\left(  
					\left( 
						a_1+ b_0\sqrt{p}
					\right)^{n} 
					-
					\left(  
						a_1 - b_0\sqrt{p}
					\right)^{n} 
				\right)   
			( 
			\left( 
				a_1+ b_0\sqrt{p}
			\right)^{n+2} 
			-
			\left(  
				a_1 - b_0\sqrt{p}
			\right)^{n+2} 
			)
		\right)  \\ 
			&= 
				\dfrac{1}{4p}    
					( 
					(a_1+b_0\sqrt{p})^{n'}
					- (a_1+b_0\sqrt{p})^{n'} 
					+ (a_1 - b_0\sqrt{p})^{n'} 
					- (a_1 - b_0\sqrt{p})^{n'}   \\	
			&\quad 
			+ 
			\left( 
				(a_1+b_0\sqrt{p})^2
				-2(a_1+ b_0\sqrt{p})(a_1- b_0\sqrt{p})
				+ (a_1-b_0\sqrt{p})^2
			\right)  \cdot  \\ 
			&\quad \quad  \cdot  (a_1+b_0\sqrt{p})^n (a_1-b_0\sqrt{p})^{n}
			)\\
%			&= \dfrac{4b_0^2p}{4p} (a_1+b_0\sqrt{p})^n (a_1-b_0\sqrt{p})^{n} \\ 
			&= b_0^2 (a_1+b_0\sqrt{p})^n (a_1-b_0\sqrt{p})^{n},  \\ 
	\end{aligned}    
\]    
where $n' = 2(n+1)$. 

Finally, we leave the proof of \ref{item:products-of-an-bn} to the reader.
\end{proof}

We give additional properties about $a_n$ and $b_n$: 
\begin{proposition}
\begin{enumerate}
\item \label{lemma:product-an-bn}  
 The following summation formulae hold:
 \begin{align*} 
pb_0b_{n-1} &= (a_n-1)(a_1 -1)  - (a_1^2-b_0^2p -2a_1+1) \sum_{k=0}^{n-1} a_k ,
		\\
b_0a_n  &= b_0+b_{n-1}(a_1-1)- (a_1^2-b_0^2p -2a_1+1)\sum_{k=0}^{n-2}b_k.
%&= b_0+b_{n-1}(a_1^2-b_0^2p -a_1)- (a_1^2-b_0^2p -2a_1+1)\sum_{k=0}^{n-1}b_k
%		\\ 
%		&= b_{n-1}a_1 (a_1 -1)-b_0(b_0pb_{n-1}-1) - (a_1+b_0\sqrt{p}-1)(a_1-b_0\sqrt{p}-1)\sum_{k=0}^{n-1}b_k
 \end{align*}
\item  \label{lemma:rate-of-growth} The growth of $\{b_n\}_n$ is determined as:  
\begin{align*}
\dfrac{b_{2n+1}}{b_n} 
	=  2 \sqrt{p} b_{n} 
		+	 
		2\left( a_1 - b_0\sqrt{p}
		\right)^{n+1} = -2 \sqrt{p} b_{n} 
		+ 2\left( a_1 + b_0\sqrt{p}\right)^{n+1} = 2a_{n+1}.
\end{align*}
\end{enumerate}
\end{proposition}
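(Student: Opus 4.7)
The plan is to exploit the closed forms \eqref{equation:explicit-form-an} and \eqref{equation:explicit-form-bn}. Setting $\alpha:=a_1+b_0\sqrt{p}$ and $\beta:=a_1-b_0\sqrt{p}$, we have $a_n=(\alpha^n+\beta^n)/2$ and $b_n=(\alpha^{n+1}-\beta^{n+1})/(2\sqrt{p})$, together with the elementary identities $\alpha+\beta=2a_1$, $\alpha\beta=a_1^2-b_0^2p$, $\alpha-\beta=2b_0\sqrt{p}$, and crucially
\[
(\alpha-1)(\beta-1)=a_1^2-b_0^2p-2a_1+1,
\]
which is exactly the coefficient appearing in the summation formulae of \eqref{lemma:product-an-bn}.

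For \eqref{lemma:product-an-bn}, I would evaluate the geometric series
\[
\sum_{k=0}^{n-1}a_k=\frac{1}{2}\left(\frac{\alpha^n-1}{\alpha-1}+\frac{\beta^n-1}{\beta-1}\right),
\]
multiply through by $(\alpha-1)(\beta-1)$ to clear denominators, and expand $(a_n-1)(a_1-1)$ in the same variables. After cancellation the right-hand side of the first identity collapses to $\tfrac14(\alpha^{n+1}-\alpha^n\beta-\alpha\beta^n+\beta^{n+1})=\tfrac14(\alpha-\beta)(\alpha^n-\beta^n)$, which equals $pb_0b_{n-1}$ since $\alpha-\beta=2b_0\sqrt{p}$ and $b_{n-1}=(\alpha^n-\beta^n)/(2\sqrt{p})$. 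The companion formula for $b_0a_n$ is handled identically, using
\[
\sum_{k=0}^{n-2}b_k=\frac{1}{2\sqrt{p}}\left(\frac{\alpha(\alpha^{n-1}-1)}{\alpha-1}-\frac{\beta(\beta^{n-1}-1)}{\beta-1}\right).
\]

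For \eqref{lemma:rate-of-growth}, the factorization $\alpha^{2n+2}-\beta^{2n+2}=(\alpha^{n+1}-\beta^{n+1})(\alpha^{n+1}+\beta^{n+1})$ yields
\[
\frac{b_{2n+1}}{b_n}=\alpha^{n+1}+\beta^{n+1}=2a_{n+1},
\]
which is the third equality. The first and second equalities then follow from $2\sqrt{p}\,b_n=\alpha^{n+1}-\beta^{n+1}$, since
\[
2\sqrt{p}\,b_n+2\beta^{n+1}=\alpha^{n+1}+\beta^{n+1}=-2\sqrt{p}\,b_n+2\alpha^{n+1}.
\]

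The main obstacle is bookkeeping in \eqref{lemma:product-an-bn}: after clearing the denominator $(\alpha-1)(\beta-1)$ one must correctly collect the mixed terms $\alpha^n\beta$ and $\alpha\beta^n$ and recognize the surviving expression as a multiple of $b_{n-1}$ or $a_n$. Since every step is a polynomial identity in the two formal symbols $\alpha$ and $\beta$, no appeal to separability of $p$ is required beyond what is needed to write $b_0\sqrt{p}=(\alpha-\beta)/2$.
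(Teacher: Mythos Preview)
Your proposal is correct and follows essentially the same route as the paper. Both arguments rest on the closed forms \eqref{equation:explicit-form-an}--\eqref{equation:explicit-form-bn}, sum the resulting geometric series in $\alpha=a_1+b_0\sqrt{p}$ and $\beta=a_1-b_0\sqrt{p}$, and for part \eqref{lemma:rate-of-growth} exploit the factorization $\alpha^{2n+2}-\beta^{2n+2}=(\alpha^{n+1}-\beta^{n+1})(\alpha^{n+1}+\beta^{n+1})$; the only cosmetic difference is that the paper computes $\sum a_k$ and $\sum b_k$ first and then rearranges, whereas you clear the denominator $(\alpha-1)(\beta-1)$ and verify the identity directly.
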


\begin{proof}
Let $\gamma = a_1+b_0\sqrt{p}$ and $\gamma' = a_1 - b_0\sqrt{p}$. 
The proof of the first equality in \ref{lemma:product-an-bn} is as follows: 
 \begin{align*}
\sum_{k=0}^{n-1} a_k 	&= \dfrac{1}{2}  	\sum_{k=0}^{n-1}  
				\left(  
					\gamma^k + (\gamma')^k  
				\right) 		\\ 
				&= \dfrac{1}{2}\dfrac{\gamma^n -1}{\gamma - 1} + \dfrac{1}{2}\dfrac{(\gamma')^n - 1}{\gamma' - 1} \\
				&= \dfrac{
						(a_1 -1)\dfrac{ \gamma^n+(\gamma')^n}{2}
					 - (a_1-1)
					-p b_0 
						\dfrac{ \gamma^n-(\gamma')^n }{2\sqrt{p}}
					}{(\gamma - 1)(\gamma' - 1)} \\
				&= \dfrac{
						 a_n(a_1-1) - (a_1-1) - pb_0b_{n-1}
						}{(\gamma - 1)(\gamma' - 1)},  \\ 
 \end{align*}   
and the proof of the second equality is: 
\begin{align*}
\sum_{k=0}^{n-1} b_k 
		&= \dfrac{1}{2\sqrt{p}} \sum_{k=0}^{n-1} (\gamma^{k+1} - ( \gamma' )^{k+1}) 
			\\   
		&= \dfrac{\gamma}{2\sqrt{p}} \dfrac{\gamma^n-1}{\gamma - 1} 
		- \dfrac{\gamma'}{2\sqrt{p}} \dfrac{(\gamma')^n-1}{\gamma' - 1}  \\ 
		&= \dfrac{
		a_1 (a_1-1) \dfrac{ \gamma^n-(\gamma')^n }{ 2\sqrt{p} }  
		- b_0  \dfrac{ \gamma^n+(\gamma')^n}{2} 
		}{
		(\gamma - 1)(\gamma' - 1)
		}  -
		\dfrac{ 
		b_0
			\left( 
				pb_0   \dfrac{ \gamma^n-( \gamma' )^n}{2\sqrt{p}}   - 1
			\right) 
		}{
		(\gamma - 1)(\gamma' - 1)
		} 
	\\  
	&= \dfrac{b_{n-1}a_1(a_1-1)-b_0(b_0p b_{n-1}-1)-b_0a_n }{
		(\gamma - 1)(\gamma' - 1) 
		}.  \\ 
\end{align*}

For \ref{lemma:rate-of-growth}, we have 
\[  
\begin{aligned} 
\dfrac{b_{2n+1}}{b_n} 
	&= 	\left( 
				\gamma^{n+1}
				\gamma^{n+1} 
			+ 
				\gamma^{n+1}
			\left( 
				\gamma'
			\right)^{n+1}  
			  - 
				\gamma^{n+1}
			\left( 
				\gamma'   
			\right)^{n+1} 
			-
			\left( 
				\gamma' 
			\right)^{n+1}
 			\left( 
				\gamma'
			\right)^{n+1}
		\right) \big/\\ 
		&\hspace{4mm} \left(
				\gamma^{n+1} 
			- 
			\left( 
				\gamma' 
			\right)^{n+1} \right) \\
	&=	\left( 
				\gamma^{n+1} 
			+
			\left( 
				\gamma' 
			\right)^{n+1}
		\right)  
		\left( 
				\gamma^{n+1} 
			-
			\left( 
				\gamma' 
			\right)^{n+1}
		\right) \big/ \\ 
		&\hspace{4mm} \left( 
				\gamma^{n+1} 
			-
			\left( 
				\gamma' 
			\right)^{n+1}
		\right) \\ 
	&=	
			\gamma^{n+1}
		+ 
		\left( 
			\gamma'   
		\right)^{n+1}  \\ 
	&= 	2 \sqrt{p} b_n + 2 \left( \gamma'  \right)^{n+1}, 
		\\ 
\end{aligned} 
\] 
and to prove the second equality, clear the denominator for $b_n$ in \eqref{equation:explicit-form-bn} to get 
\[ 
2 \sqrt{p} b_n = \gamma^{n+1}-(\gamma')^{n+1}.
\] 
\end{proof}

\section{Second order linear differential equations and polynomials $a_n$ and $b_n$}\label{section:new-diff-equations}
\subsection{The key lemma and its corollaries}
For this section, we still consider $p$ to be a separable polynomial. Recall that the hypergeometric function $_2F_1(a,b;c;z)$ is defined as 
$$
_2F_1(a,b;c;z):=\sum_{n\geq 0}\frac{(a)_n(b)_n}{(c)_nn!}z^n, 
$$
where $(a)_n:=a(a+1)(a+2)\cdots (a+n-1)$ is the rising Pochhammer symbol. 
The following is the key lemma used to prove Theorem~\ref{cor:DEQ}: 
\begin{lemma}\label{prop:hypergeometric-function}
If $a_1^2-b_0^2p=t^{2r}$, then we have a hypergeometric function description of $a_n$ and $b_n$: 
\begin{align*}
a_n&=t^{rn} {_2F_1}\left(-n, n;\frac{1}{2};\frac{1}{2}(1-(a_1/t^r))\right)=t^{rn} T_n(a_1/t^r),  \\
b_n& = b_0t^{rn} (n+1){_2F_1}\left(-n, n+2;\frac{3}{2};\frac{1}{2}(1-(a_1/t^r))\right) =b_0t^{rn}U_n(a_1/t^r), 
\end{align*}
where $T_n(t)$ and $U_n(t)$ are $n$-th Chebyshev polynomials of the first and second kind, respectively. 
\end{lemma}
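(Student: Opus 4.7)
The plan is to exploit the explicit closed forms \eqref{equation:explicit-form-an} and \eqref{equation:explicit-form-bn} together with the hypothesis $a_1^2-b_0^2p=t^{2r}$, which is exactly the relation needed to identify the two terms $a_1\pm b_0\sqrt p$ as a reciprocal pair (up to the scaling $t^r$). Once that identification is made, the expressions for $a_n$ and $b_n$ become verbatim the Chebyshev-type evaluations $(y^n+y^{-n})/2$ and $(y^{n+1}-y^{-(n+1)})/(y-y^{-1})$, after which one only needs to quote standard formulas.

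Concretely, first I would introduce the abbreviations $\alpha:=a_1+b_0\sqrt p$ and $\bar\alpha:=a_1-b_0\sqrt p$, so that by hypothesis $\alpha\bar\alpha=t^{2r}$. Setting $y:=\alpha/t^r$, this gives $\bar\alpha/t^r=1/y$ and $y+y^{-1}=2a_1/t^r$. Substituting into \eqref{equation:explicit-form-an} yields
\[
a_n=\tfrac12(\alpha^n+\bar\alpha^n)=\tfrac{t^{rn}}{2}(y^n+y^{-n})=t^{rn}\,T_n\!\left(a_1/t^r\right),
\]
by the classical identity $T_n((y+y^{-1})/2)=(y^n+y^{-n})/2$. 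Next, since $\alpha-\bar\alpha=2b_0\sqrt p$, equation \eqref{equation:explicit-form-bn} can be rewritten as $b_n=b_0(\alpha^{n+1}-\bar\alpha^{n+1})/(\alpha-\bar\alpha)$, and factoring out powers of $t^r$ gives
\[
b_n=b_0 t^{rn}\,\frac{y^{n+1}-y^{-(n+1)}}{y-y^{-1}}=b_0 t^{rn}\,U_n\!\left(a_1/t^r\right),
\]
again by the standard identity $U_n((y+y^{-1})/2)=(y^{n+1}-y^{-(n+1)})/(y-y^{-1})$.

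To conclude the hypergeometric description, I would invoke the well-known Gauss representations
\[
T_n(x)={}_2F_1\!\left(-n,n;\tfrac12;\tfrac{1-x}{2}\right),\qquad U_n(x)=(n+1)\,{}_2F_1\!\left(-n,n+2;\tfrac32;\tfrac{1-x}{2}\right),
\]
and substitute $x=a_1/t^r$. This matches precisely the two stated formulas for $a_n$ and $b_n$.

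There is essentially no serious obstacle here: everything is driven by the factorization $\alpha\bar\alpha=t^{2r}$, which is equivalent to the hypothesis. The only mild care needed is to verify that $\sqrt p$ genuinely cancels in the expression for $b_n$ (it does, because $\alpha-\bar\alpha=2b_0\sqrt p$ divides the numerator of \eqref{equation:explicit-form-bn} algebraically), so the final output lives in $\mathbb C[t]$ as claimed, and to ensure the Chebyshev parametrization $y\leftrightarrow\cos\theta$ is used only at the level of formal identities in $y$, so that the argument does not require choosing a branch of $\sqrt p$ or of the logarithm.
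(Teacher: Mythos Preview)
Your proof is correct. Both you and the paper start from the closed forms \eqref{equation:explicit-form-an}--\eqref{equation:explicit-form-bn}, but you take a slightly different route from there. The paper expands $(a_1\pm b_0\sqrt{p})^n$ by the binomial theorem, collects the even powers of $b_0\sqrt{p}$ using $b_0^2p=a_1^2-t^{2r}$ to obtain sums of the shape $\sum_k \binom{n}{2k}(a_1^2-t^{2r})^k a_1^{n-2k}$, and then matches these term by term against the hypergeometric series ${}_2F_1$, reading off the Chebyshev identification at the end. You instead observe directly that $\alpha\bar\alpha=t^{2r}$ makes $y=\alpha/t^r$ and $y^{-1}=\bar\alpha/t^r$ a reciprocal pair with $(y+y^{-1})/2=a_1/t^r$, so the closed forms become the standard ``$\cos\theta$-parametrization'' identities $T_n((y+y^{-1})/2)=(y^n+y^{-n})/2$ and $U_n((y+y^{-1})/2)=(y^{n+1}-y^{-(n+1)})/(y-y^{-1})$, after which you quote the Gauss representations of $T_n$ and $U_n$. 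Your argument is a bit more conceptual and avoids the binomial bookkeeping; the paper's version has the advantage of displaying the ${}_2F_1$ coefficients explicitly rather than invoking them as a known formula. Either way the content is the same.
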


\begin{proof}
We have
\begin{align*}
a_n 	&= 	\frac{(a_1+b_0\sqrt{p})^{n}+(a_1-b_0\sqrt{p})^{n}}{2 }   \\
	&= \sum_{k=0}^{\floor*{ n/2}}
\binom{n}{2k}(a_1^2-t^{2r})^ka_1^{n-2k} \\
	&=a_1^n \sum_{k=0}^{\floor*{ n/2}}
\binom{n}{2k}(1-(t^{2r}/a_1^2))^k  \\
&=t^{rn} {_2F_1}\left(-n, n;\frac{1}{2};\frac{1}{2}(1-(a_1/t^r))\right) \\
&=t^{rn} T_n(a_1/t^r), 
\end{align*}
and 
\begin{align*}
b_n 	&= 	\frac{(a_1+b_0\sqrt{p})^{n+1}-(a_1-b_0\sqrt{p})^{n+1}}{2\sqrt{p}}   \\
	&=b_0 \sum_{k=0}^{\floor*{ n/2}}
\binom{n+1}{2k+1}(a_1^2-t^{2r})^ka_1^{n-2k} \\
	&=b_0a_1^n \sum_{k=0}^{\floor*{ n/2}}
\binom{n+1}{2k+1}(1-(t^{2r}/a_1^2))^k  \\
	&=b_0 t^{rn}(a_1/t^r)^n \sum_{k=0}^{\floor*{ n/2}}
\binom{n+1}{2k+1}(1-(t^{2r}/a_1^2))^k  \\
&=b_0t^{rn} (n+1){_2F_1}\left(-n, n+2;\frac{3}{2};\frac{1}{2}(1-(a_1/t^r))\right) \\
&=b_0t^{rn} U_n(a_1/t^r).  
\end{align*}
\end{proof}

%\begin{corollary}\label{lemma:Jacobi-polynomials}  If $a_1^2-b_0^2p=t^{2r}$, then the families $a_n$ and $b_n$ are related to Jacobi polynomials:  
%\begin{align*} 
%a_n &= \dfrac{t^{rn}}{\binom{n-1/2}{n}}P_n^{-\frac{1}{2}, -\frac{1}{2} }(a_1/t^r) = \dfrac{n}{2}t^{rn}C_n^0(a_1/t^r),  \\  
%b_n &= \dfrac{b_0t^{rn}}{2\binom{n+1/2}{n} } P_n^{\frac{1}{2},\frac{1}{2} }(a_1/t^r) = b_0t^{rn}C_n^1(a_1/t^r), \\   
%\end{align*} 
%where $P_n^{\alpha,\beta}(z)$ are   Jacobi polynomials and 
%$C_n^{(\alpha)}(z)$  are ultraspherical polynomials. 
%\end{corollary}
%
%
%\begin{proof}
%Since 
%\[
%T_n(z)=\dfrac{1}{\binom{n-1/2}{n}}P_n^{-\frac{1}{2},-\frac{1}{2}}(z) = \dfrac{n}{2}C_n^0(z) 
%\quad 
%\mbox{ and } 
%\quad 
%U_n(z) = \dfrac{1}{2\binom{n+1/2}{n}}P_n^{\frac{1}{2},\frac{1}{2}}(z) = C_n^1(z), 
%\]
%the above equalities follow immediately since 
%$a_n=t^{rn}T_n(a_1/t^r)$ 
%and 
%$b_n = b_0t^{rn}U_n(a_1/t^r)$. 
%\end{proof}
%

The next two corollaries provide us with alternative ways to compute the $a_n$ and $b_n$ that do not require their defining recursion relation. 
\color{black}

\begin{corollary}  Provided $a_1^2-b_0^2p=t^{2r}$, the polynomials $a_n$ and $b_n$ are given by the following determinant formulae:
\begin{equation}
a_n(t)=\det\begin{pmatrix} a_1 & t^r & 0& 0 & \cdots & 0 & 0 \\
t^r & 2a_1 & t^r& 0 & \ddots & 0 & 0 \\
0 & t^r & 2a_1& t^r & \ddots & 0 & 0 \\
0 & 0 & t^r& 2a_1 & \ddots & 0 & 0 \\
\vdots  & \ddots  & \ddots& \ddots & \ddots & \ddots & \vdots \\
0  & 0  & 0& 0 & \cdots & 2a_1 & t^r \\
0  & 0  & 0& 0 & \cdots & t^r & 2a_1 \\
\end{pmatrix} 
\text{ and }  \:\:\:\: 
b_n(t)=b_0\det\begin{pmatrix} 2a_1 & t^r & 0& 0 & \cdots & 0 & 0 \\
t^r & 2a_1 & t^r& 0 & \ddots & 0 & 0 \\
0 & t^r & 2a_1& t^r & \ddots & 0 & 0 \\
0 & 0 & t^r& 2a_1 & \ddots & 0 & 0 \\
\vdots  & \ddots  & \ddots& \ddots & \ddots & \ddots & \vdots \\
0  & 0  & 0& 0 & \cdots & 2a_1 & t^r \\
0  & 0  & 0& 0 & \cdots & t^r & 2a_1 \\
\end{pmatrix},\notag
\end{equation}
where the above are $n\times n$ matrices, with $n\geq 1$.
\end{corollary}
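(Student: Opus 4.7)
The plan is to prove both identities by induction on $n$ via cofactor expansion along the last row, reducing each determinant to the three-term recurrence that $a_n$ and $b_n$ are already known to satisfy. Under the standing hypothesis $a_1^2-b_0^2p=t^{2r}$, the recurrence~\eqref{bnrecursion} becomes
\[
a_n=2a_1 a_{n-1}-t^{2r}a_{n-2},\qquad b_n=2a_1 b_{n-1}-t^{2r}b_{n-2},
\]
so the target is to show that each tridiagonal determinant obeys the same recurrence with matching initial data.

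First I would handle the $b$-case. Let $D_n$ denote the $n\times n$ determinant on the right-hand side. Expanding $D_n$ along the last row (or column), the only two nonzero minors come from the $(n,n)$ entry $2a_1$ and the $(n,n-1)$ entry $t^r$; the latter's cofactor itself picks up another factor of $t^r$ from the $(n-1,n)$ entry above it, yielding
\[
D_n=2a_1\,D_{n-1}-t^{r}\cdot t^{r}\,D_{n-2}=2a_1 D_{n-1}-t^{2r}D_{n-2}\qquad (n\ge 3).
\]
Then I would verify the base cases: $D_1=2a_1$ gives $b_0 D_1=2a_1b_0=b_1$ by~\eqref{eq:initial-conditions}, and $D_2=4a_1^2-t^{2r}$ gives $b_0D_2=2a_1b_1-t^{2r}b_0=b_2$. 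The induction then concludes $b_n=b_0 D_n$ for all $n\ge 1$.

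The $a$-case is essentially identical, with the single wrinkle that the $(1,1)$ entry is $a_1$ instead of $2a_1$. Let $E_n$ denote the corresponding determinant. Expansion along the last row still gives $E_n=2a_1E_{n-1}-t^{2r}E_{n-2}$ for $n\ge 3$, because the modified $(1,1)$ entry never enters the last-row cofactor until the recursion bottoms out. The base cases $E_1=a_1=a_1$ and $E_2=2a_1^2-t^{2r}=a_2$ (using $a_2=2a_1\cdot a_1-t^{2r}\cdot 1$ from the $a$-recurrence with $a_0=1$) match, and induction delivers $a_n=E_n$.

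There is really no serious obstacle: the determinantal identity is a direct packaging of the three-term recurrences already proved, together with Lemma~\ref{prop:hypergeometric-function} reducing the situation to the classical Chebyshev tridiagonal formulas. The only point that requires a moment of care is the asymmetric $(1,1)$ entry in the $a$-matrix—one must check that the cofactor expansion from the bottom keeps this entry inert until the $n=2$ step, so that it enters exactly where it is needed to shift the initial condition from $b_1=2a_1b_0$ to $a_1=a_1$. Alternatively, one can invoke the standard identities $T_n(x)=\det(xI-J_n^T)$ and $U_n(x)=\det(xI-J_n^U)$ for the appropriate Jacobi matrices and rescale rows/columns by $t^r$ to produce exactly the displayed matrices, a proof that simply repackages Lemma~\ref{prop:hypergeometric-function}.
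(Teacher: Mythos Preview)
Your induction via cofactor expansion is correct and complete: the tridiagonal determinants satisfy the same three-term recurrence as $a_n$ and $b_n$ (with the coefficient $a_1^2-b_0^2p$ specialized to $t^{2r}$), and your base cases are accurate.

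The paper takes a different route. It does not argue from the recurrence at all; instead it invokes Lemma~\ref{prop:hypergeometric-function}, which identifies $a_n=t^{rn}T_n(a_1/t^r)$ and $b_n=b_0t^{rn}U_n(a_1/t^r)$, and then quotes a determinant formula of P.~L.~Nash for $T_n$ and $U_n$ (tridiagonal matrices with $-1$ on the off-diagonals and $2t$ on the diagonal, with the $(1,1)$ entry modified to $t$ for $T_n$). The substitution $t\mapsto a_1/t^r$ followed by multiplying each row and column by $t^r$ (and absorbing the resulting $t^{rn}$) then yields the displayed matrices. This is exactly the ``alternative'' you sketch in your final sentence.

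Your direct approach is arguably cleaner here: it is entirely self-contained, uses only the recurrences already derived in Section~\ref{subsection:section2-general-separable-p}, and avoids both the hypergeometric identification and the external citation. The paper's approach, on the other hand, situates the result as a transparent consequence of the Chebyshev connection, which is the unifying theme of the whole section.
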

\begin{proof}  
This follows from Lemma~\ref{prop:hypergeometric-function} 
and a result of P. L. Nash \cite{MR866596}:
\begin{equation}
T_n(t)=\det\begin{pmatrix}
t & -1 & 0& 0 & \cdots & 0 & 0 \\
-1 & 2t & -1& 0 & \ddots & 0 & 0 \\
0 & -1 & 2t& -1 & \ddots & 0 & 0 \\
0 & 0 & -1& 2t & \ddots & 0 & 0 \\
\vdots  & \ddots  & \ddots& \ddots & \ddots & \ddots & \vdots \\
0  & 0  & 0& 0 & \cdots & 2t & -1 \\
0  & 0  & 0& 0 & \cdots & -1 & 2t \\
\end{pmatrix} 
\text{ and } 
U_n(t)=\det\begin{pmatrix} 
2t & -1 & 0& 0 & \cdots & 0 & 0 \\
-1 & 2t & -1& 0 & \ddots & 0 & 0 \\
0 & -1 & 2t& -1 & \ddots & 0 & 0 \\
0 & 0 & -1& 2t & \ddots & 0 & 0 \\
\vdots  & \ddots  & \ddots& \ddots & \ddots & \ddots & \vdots \\
0  & 0  & 0& 0 & \cdots & 2t & -1 \\
0  & 0  & 0& 0 & \cdots & -1 & 2t \\
\end{pmatrix}.\notag
\end{equation}
\end{proof}

Corollary~\ref{corollary:rodrigues-formula} is an analogue of Rodrigues' formula. 

\begin{corollary}\label{corollary:rodrigues-formula}  If $a_1^2-b_0^2p=t^{2r}$ for some nonnegative integer $r$, then 
\begin{align}
a_n(t)&=\frac{(-1)^nt^{(n-1)r}\sqrt{\pi}\sqrt{t^{2r}-a_1^2}}{2^n(n-\frac{1}{2})!} D^n
\left(
		t^{-2nr+r} 
\left(t^{2r}-a_1^2\right)^{n-\frac{1}{2}}
\right),  \\
b_n(t)&=\frac{(-1)^n(n+1)b_0t^{(n+1)r}\sqrt{\pi} }{2^{n+1}(n+\frac{1}{2})!\sqrt{t^{2r}-a_1^2}} D^n
\left(
		t^{-2nr-r}  
\left(t^{2r}-a_1^2\right)^{n+\frac{1}{2}}
\right),  
\end{align}
where 
$\displaystyle{D=\frac{t^{r+1}}{a'_1t-ra_1}\frac{d}{dt}}$.
\end{corollary}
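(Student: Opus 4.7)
The plan is to reduce the claim to the classical Rodrigues formulae for Chebyshev polynomials by means of the substitution $x = a_1/t^r$ and Lemma~\ref{prop:hypergeometric-function}. Recall that the classical Rodrigues formulae read
\[
T_n(x) = \frac{(-1)^n\sqrt{\pi}\sqrt{1-x^2}}{2^n(n-\tfrac12)!}\,\frac{d^n}{dx^n}\!\left((1-x^2)^{n-1/2}\right),
\qquad
U_n(x) = \frac{(-1)^n(n+1)\sqrt{\pi}}{2^{n+1}(n+\tfrac12)!\sqrt{1-x^2}}\,\frac{d^n}{dx^n}\!\left((1-x^2)^{n+1/2}\right).
\]
By Lemma~\ref{prop:hypergeometric-function}, under the hypothesis $a_1^2 - b_0^2 p = t^{2r}$ we have $a_n = t^{rn}T_n(a_1/t^r)$ and $b_n = b_0 t^{rn}U_n(a_1/t^r)$, so the game is simply to express everything on the right-hand sides of the above Rodrigues identities in the variable $t$, evaluated at $x = a_1/t^r$.

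First I would verify that the operator $D$ defined in the statement is nothing more than $d/dx$ written in the $t$-variable. A direct computation gives
\[
\frac{dx}{dt} = \frac{d}{dt}\!\left(\frac{a_1}{t^r}\right) = \frac{a_1't^r - ra_1 t^{r-1}}{t^{2r}} = \frac{a_1't - ra_1}{t^{r+1}},
\]
whence by the chain rule
\[
\frac{d}{dx} = \frac{dt}{dx}\frac{d}{dt} = \frac{t^{r+1}}{a_1't - ra_1}\frac{d}{dt} = D.
\]
Consequently $D^n$ agrees with $d^n/dx^n$ when acting on any function of $t$ regarded as a function of $x$ via $x = a_1/t^r$.

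Next I would convert the remaining ingredients. Since
\[
1 - x^2 = 1 - \frac{a_1^2}{t^{2r}} = \frac{t^{2r}-a_1^2}{t^{2r}},
\]
we have $\sqrt{1-x^2} = t^{-r}\sqrt{t^{2r}-a_1^2}$ and, for any real exponent $\alpha$,
\[
(1-x^2)^{\alpha} = t^{-2r\alpha}(t^{2r}-a_1^2)^{\alpha}.
\]
Substituting $\alpha = n-\tfrac12$ into the $T_n$-formula and multiplying through by $t^{rn}$ gives
\[
a_n = t^{rn}\cdot\frac{(-1)^n\sqrt{\pi}\,t^{-r}\sqrt{t^{2r}-a_1^2}}{2^n(n-\tfrac12)!}\,D^n\!\left(t^{-(2n-1)r}(t^{2r}-a_1^2)^{n-1/2}\right),
\]
which, after collecting the prefactor of $t$ and rewriting $-(2n-1)r$ as $-2nr + r$, is exactly the advertised formula for $a_n$. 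The argument for $b_n$ is identical with $\alpha = n+\tfrac12$, using $1/\sqrt{1-x^2} = t^r/\sqrt{t^{2r}-a_1^2}$ and the exponent bookkeeping $-(2n+1)r = -2nr - r$.

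The only mild subtlety, and the one that I expect to be the main point to watch, is the chain-rule identification $D = d/dx$: one must check that the factor $a_1't - ra_1$ in the denominator is nonzero generically (so the operator is well defined on rational functions of $t$) and that, although the intermediate expressions $\sqrt{t^{2r}-a_1^2}$ and $(t^{2r}-a_1^2)^{n\pm 1/2}$ are not polynomial, the square roots cancel in pairs between the prefactor and the argument of $D^n$, leaving a polynomial identity. This cancellation is guaranteed by the polynomial character of $T_n$ and $U_n$ applied to $a_1/t^r$, so no further analysis is needed beyond the substitution above.
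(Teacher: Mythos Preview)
Your proposal is correct and follows exactly the paper's approach: quote the classical Rodrigues formulae for $T_n$ and $U_n$, then invoke Lemma~\ref{prop:hypergeometric-function} and the substitution $x=a_1/t^r$. In fact you supply more detail than the paper does---the paper simply states the two Chebyshev Rodrigues formulae and writes ``Using Lemma~\ref{prop:hypergeometric-function}, we get the desired result,'' whereas you explicitly verify $D=d/dx$ and track the powers of $t$.
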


\begin{proof}
The following are Rodrigues' formulae for the Chebyshev polynomials of the first and second kinds:
\begin{align}
T_n(t)&=\frac{(-1)^n\sqrt{\pi}\sqrt{1-t^2}}{2^n(n-\frac{1}{2})!} \frac{d^n}{dt^n}\left(\left(1-t^2\right)^{n-\frac{1}{2}}\right),  \\
U_n(t)&=\frac{(-1)^n(n+1)\sqrt{\pi} }{2^{n+1}(n+\frac{1}{2})!\sqrt{1-t^2}} \frac{d^n}{dt^n}\left(\left(1-t^2\right)^{n+\frac{1}{2}}\right).  
\end{align}
Using Lemma \eqref{prop:hypergeometric-function}, we get the desired result.
\end{proof}

Next result is also essentially a corollary to Lemma~\ref{prop:hypergeometric-function}, but since it is more significant to us, we view it as a theorem. 
 \subsection{Second order linear differential equations}  
\begin{theorem}\label{cor:DEQ}
If $a_1^2-b_0^2p=t^{2r}$, then 
\begin{align*}
0 &=t^2(t^{2r}-a_1^2) (a_1't-ra_1)  a_n''  \\ 
	&\quad 
	- \left(2 r n t(t^{2r}-a_1^2)(a_1't-ra_1)   + a_1t(a'_1t-ra_1)^2   \right.		\\ 
 	&\left.\qquad + t(t^{2r}-a_1^2)\left(t \left(t a_1''-2r a_1'\right)  +r(r+1)a_1  \right)  \right) a_n'    \\ 
     &\quad 
+ \left(r n (t^{2r}-a_1^2)\left(t \left(t a_1''-2r a_1'\right)+r(r+1) a_1\right)  
+ r n a_1(a'_1t-ra_1)^2     \right. \\ 
&\left.\qquad +n^2\left(t a_1'-ra_1\right)^3 
+ (r n+1) r n(t^{2r}-a_1^2)(a_1't-ra_1) \right) a_n, \\ 
\text{ and } &\\
0 &= t^2 (t^{2r}-a_1^2)(t a_1'-a_1 r) b_n'' \\
&\quad -( 2 n r t(t^{2r}-a_1^2)(ta_1'-a_1 r)^2+3 a_1 t (t a_1'-a_1 r)^2 		\\ 
	&\qquad + t (t^{2r}-a_1^2)(a_1 r (r+1) + t(t a_1''- 2 r a_1'))
	 )  b_n'	\\ 
&\quad +( n(n+2)(ta_1'-a_1 r)^3 
+ 3 a_1 r n (ta_1'-a_1 r)^2 	\\ 
&\qquad 	+r n (t^{2r}-a_1^2)(a_1 r (r+1)+t(ta_1''-2ra_1') )  + nr(rn+1)(t^{2r}-a_1^2)(ta_1'-a_1 r)^2 
)b_n,
\end{align*}

\begin{comment}
\begin{align*}
0   &=t^2(t^{2r}-a_1^2)(a_1't-ra_1) a_n''  \\
   &\quad -t\left(2 n(t^{2r}-a_1^2)(a_1't-ra_1) +(t^{2r}-a_1^2)\left(t \left(t a_1''-2r a_1'\right)+r(r+1) a_1\right) +a_1(a'_1t-ra_1)^2 \right)a_n'\\ 
   &\quad+n\left((n+1) (t^{2r}-a_1^2)(a_1't-ra_1) -t(t^{2r}-a_1^2)\left(t \left(t a_1''-2 ra_1'\right)+r(r+1) a_1\right) +a_1(a'_1t-ra_1)^2    +n\left(ta_1'- ra_1\right)^3\right)a_n
\text{ and }& \\
0   &=t^2(t^{2r}-a_1^2)(a_1't-ra_1) b_n''  \\
   &\quad -t\left(2 n(t^{2r}-a_1^2)(a_1't-ra_1) +(t^{2r}-a_1^2)\left(t \left(t a_1''-2 ra_1'\right)+r(r+1) a_1\right) +3a_1(a'_1t-ra_1)^2 \right)b_n'\\ 
   &\quad+n\left((n+1) (t^{2r}-a_1^2)(a_1't-ra_1) -t(t^{2r}-a_1^2)\left(t \left(t a_1''-2r a_1'\right)+r(r+1) a_1\right) +3a_1(a'_1t-ra_1)^2    +(n+2)\left(ta_1'- ra_1\right)^3\right)b_n
   \end{align*}
\end{comment}
   for all $n\geq 0$, where for the last equality, we assume $b_0$ is a constant. 
\end{theorem}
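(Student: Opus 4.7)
The approach is to reduce the assertion to the classical Chebyshev differential equations via the change of variables supplied by Lemma~\ref{prop:hypergeometric-function}. Writing $x := a_1/t^r$, that lemma gives $a_n = t^{rn} T_n(x)$ and $b_n = b_0\, t^{rn} U_n(x)$, and $T_n,U_n$ are known to satisfy $(1-x^2)T_n'' - xT_n' + n^2 T_n = 0$ and $(1-x^2)U_n'' - 3xU_n' + n(n+2)U_n = 0$. Three elementary identities make the factorizations in the statement transparent:
\[
1 - x^2 = \frac{t^{2r}-a_1^2}{t^{2r}}, \qquad t^{r+1}x' = ta_1' - ra_1, \qquad t^{r+2}x'' = t(ta_1''-2ra_1') + r(r+1)a_1,
\]
so the three factors $t^{2r}-a_1^2$, $ta_1'-ra_1$ and $t(ta_1''-2ra_1')+r(r+1)a_1$ appearing in the theorem are exactly the shadows of $1-x^2$, $x'$, $x''$ after transport back to the $t$-variable.

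\textbf{Key steps.} Setting $y := T_n(x)$, the chain rule gives
\begin{align*}
a_n &= t^{rn}y, \\
a_n' &= t^{rn-1}\bigl(rn\,y + tx'\,y_x\bigr), \\
a_n'' &= t^{rn-2}\bigl(rn(rn-1)y + (2rn\,tx' + t^2 x'')\,y_x + t^2(x')^2\, y_{xx}\bigr).
\end{align*}
The first two relations invert linearly to produce $y = t^{-rn}a_n$ and $y_x = (ta_n' - rn\,a_n)/\bigl(t^{rn}(ta_1'-ra_1)\bigr)$. The Chebyshev-$T$ equation then eliminates $y_{xx}$ via $y_{xx} = (xy_x - n^2 y)/(1-x^2)$; substituting this into the expression for $a_n''$ and multiplying through by the common denominator $t^2(t^{2r}-a_1^2)(ta_1'-ra_1)$ clears everything, since the factor $ta_1'-ra_1$ comes from $y_x$ and the factor $t^{2r}-a_1^2$ comes from $y_{xx}$. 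Collecting terms by powers of $n$ and by the basic quantities $a_1,a_1',a_1''$ then yields the stated equation for $a_n$.

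The derivation for $b_n$ is structurally identical, starting from $b_n = b_0 t^{rn}U_n(x)$ and using $(1-x^2)U_n'' - 3xU_n' + n(n+2)U_n = 0$ in place of the Chebyshev-$T$ equation. The coefficient $3$ in $-3xU_n'$ is the source of the $3a_1 t(ta_1'-a_1 r)^2$ summand in the coefficient of $b_n'$, while $n(n+2)$ replaces $n^2$ in the leading contribution to the coefficient of $b_n$. The hypothesis that $b_0$ is a constant is used precisely to pull $b_0$ outside of the $t$-derivatives, since any $t$-dependence of $b_0$ would inject extra $b_0'$ and $b_0''$ terms that would spoil the closed form.

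\textbf{Main obstacle.} All of the mathematical content lives in Lemma~\ref{prop:hypergeometric-function} and the two classical Chebyshev ODEs; what remains is bookkeeping. The principal hurdle is verifying that, after clearing denominators, every rational expression collapses into exactly the coefficients listed, especially the cross-terms involving $a_1 a_1''$, $(a_1')^2$, $a_1 a_1'$ and the various powers of $a_1$. I would perform the collation by hand in the cases $r=0,1$ and then confirm the general polynomial identity with a computer algebra system as a sanity check.
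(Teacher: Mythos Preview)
Your proposal is correct and follows essentially the same route as the paper: both proofs invoke Lemma~\ref{prop:hypergeometric-function} to write $a_n=t^{rn}T_n(a_1/t^r)$ and $b_n=b_0t^{rn}U_n(a_1/t^r)$, transport the classical Chebyshev ODEs through the substitution $z=a_1/t^r$, and clear denominators. The only cosmetic difference is that the paper packages the change of variables by expressing $d/dz$ and $d^2/dz^2$ as differential operators in $d/dt$ and then applies them to $t^{-rn}a_n$, whereas you compute $a_n',a_n''$ via the chain rule and invert---these are the same computation read in opposite directions.
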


\begin{proof}
We have 
 \begin{align*}
  (1-z^2)T''_n(z)-z T'_n(z)+n^2T_n(z)&=0, \\
  (1-z^2 )U''_n(z)-3z U'_n(z)+n(n+2)U_n(z)&=0,
  \end{align*} 
  so using $z=a_1(t)/t^r$ and
%   and the fact that 
%\begin{align*}
%  \frac{d}{dz}&=\frac{1}{\frac{dz}{dt}}\frac{d}{dt}=\frac{t^{r+1}}{a'_1t-ra_1}\frac{d}{dt}, \\
%   \frac{d^2}{dz^2}&=\frac{1}{\frac{dz}{dt}}\frac{d}{dt}\circ \frac{1}{\frac{dz}{dt}}\frac{d}{dt}=\frac{t^{r+1}}{a'_1t-ra_1}\left(\frac{-t ^r\left(t \left(t a_1''-2r a_1'\right)+r(r+1) a_1\right)}{\left(t a'_1- ra_1\right)^2}\right) \frac{d}{dt}+\left(\frac{t^{r+1}}{a'_1t-ra_1}\right)^2\frac{d^2}{dt^2}, 
%\end{align*}
%we obtain, by 
Lemma~\ref{prop:hypergeometric-function}, 
we obtain the second order differential equation for the $a_n$ given above.  
Similarly, 
using $b_n = t^{r n} b_0 U_n(a_1/t^r)$, we obtain the last second order differential equation for the $b_n$. 
%we have 
%\begin{align*}
%0 &= t^2 (t^{2r}-a_1^2)(t a_1'-a_1 r)^2 b_0^{-1} t^{-rn} b_n'' \\
%&\quad -( 2 n r t(t^{2r}-a_1^2)(ta_1'-a_1 r)^2	+ 3 a_1 t (t a_1'-a_1 r)^2 		\\ 
%	&\qquad + t (t^{2r}-a_1^2)(a_1 r (r+1) + t(t a_1''- 2 r a_1'))
%	 ) b_0^{-1} t^{- r n} b_n'	\\ 
%&\quad +( n(n+2)(ta_1'-a_1 r)^3 
%+ 3 a_1 r n (ta_1'-a_1 r)^2
%+r n (t^{2r}-a_1^2)(a_1 r (r+1)+t(ta_1''-2ra_1') ) 	\\ 
%&\qquad 	+ nr(rn+1)(t^{2r}-a_1^2)(ta_1'-a_1 r)^2 
%)b_0^{-1} t^{-r n}b_n. \\
%\end{align*}
 \end{proof}
\color{black}

%
%\begin{lemma}
%Let $t_k=\cos\left( \pi \frac{2k+1}{2N}\right)$ (modify) be $N$ nodes. Then 
%\begin{align*}
%\sum_{k=0}^{N-1} b_i(t_k)b_j(t_k)
%	\left( 
%		\dfrac{t_k^4-2\beta t_k^2+1}{2(\beta-1)} 
%	\right) = 
%		\begin{cases}
%				TODO (Ben?) 	\ldots & \mbox{ if } i\not= j, \\ 
%				TODO (Ben?) 	\ldots & \mbox{ if } i = j, \\ 
%		\end{cases}
%\end{align*}
%and without the weight function, we have 
%\begin{align*}
%\sum_{k=0}^{N-1} b_i(t_k)b_j(t_k) = 
%	\begin{cases}
%			\ldots TODO (Ben?) 	& \mbox{ if } parity(i)\not=parity(j), \\ 
%		 	\ldots TODO (Ben?) 	& \mbox{ if } parity(i) = parity(j). \\ 
%	\end{cases}
%\end{align*}
%
%
%\end{lemma}
%
%
% 
%
% 
%
%
%
% 
%\begin{lemma}\label{lemma:Jacobi-polynomials}
%We relate $b_n$ to Jacobi polynomials: 
%\begin{align*}
%b_n \stackrel{?}{=} \dfrac{1}{2\binom{n+1/2}{n}}p_n^{\frac{1}{2}, \frac{1}{2}}       TODO  (Ben?)
%\end{align*}
%\end{lemma}
%
%
%  
%
%
% 
%
%
%
%
%
%
%
%
%Second order linear differential equation for $b_n$: 
%\[
%\begin{aligned}
%0 &= t(t^2+1)(t^4-2\beta t^2+1)y'' \\ 
%&- ((2n-3)t^6+t^4(-4\beta n + 2n-5)+t^2(4\beta-4\beta n+2n+3)+2n+1)y' \\ 
%&- 2(2nt^5+nt^3(\beta + (\beta+1)n+5)+nt(-\beta+(\beta+1)n+1)) y. \\ 
%\end{aligned}
%\] 
%
% Where is the diff eq for $a_n$? \color{black}

%\color{blue}
\begin{remark}
In the $DJKM$ setting, we obtain a second order linear differential equation that the $a_n$'s satisfy: 
\begin{align}\label{align:2ndsecondode}
0 &= t(t^2+1)(t^4-2\beta t^2+1)y''  \\ 
&- ((1-2 n) t^6+(2 n+3) t^4+t^2 (-4 \beta +(4 \beta -2) n-1)-2 n+1)y' \notag \\ 
&-(\beta +1) n t \left(n t^2+n+t^2-1\right)y.\notag  
\end{align}
We were not able to come up with this differential equation in \cite{cox2014simple}.
\end{remark}
%\color{black}

\begin{remark}\label{remark:Fuchsian-type}  In the setting of the $DJKM$-algebra (cf. Equation~\eqref{eq:polynomial-DJKM}), the differential equation in Theorem~\ref{cor:DEQ} for $b_n$'s reduces to Equation~\eqref{align:firsttsquaredode}. Moreover, this differential equation for $b_n$ is of Fuchsian type since the analytic coefficient of 
\begin{align}
0 &= y'' - \dfrac{((2n-3)t^6+t^4(-4\beta n + 2n-5)+t^2(4\beta-4\beta n+2n+3)+2n+1)}{t(t^2+1)(t^4-2\beta t^2+1)}y' \\ 
&- \dfrac{2(2nt^5+nt^3(\beta + (\beta+1)n+5)+nt(-\beta+(\beta+1)n+1))}{t(t^2+1)(t^4-2\beta t^2+1)} y \notag
\end{align}
for $y'$  has (distinct) poles at 
$0, \pm i, \pm \sqrt{\beta \pm \sqrt{\beta^2-1}}$ and the degree of the coefficient polynomial of $y'$ in \eqref{align:firsttsquaredode} is $6$, 
while the degree of the polynomial coefficient of $y$ in \eqref{align:firsttsquaredode} is $5$, which is less than or equal to $12$, as required. 

Similarly, one can see that the differential equation \eqref{align:2ndsecondode} is of Fuchsian type.  
\end{remark}

%\begin{remark}\label{remark:new-diff-eqn-Fuchsian-type}
%The differential equations for $a_n$ and $b_n$ generalizing the differential equation in Corollary~\ref{cor:DEQ} are of Fuchsian type since the roots of 
%$t^2 (t^{2r}-a_1^2)(a_1't-ra_1)$ are distinct ( need to check?\color{black}) and degree of the coefficient polynomials of $a_n'$ and $a_n$ are all bounded above appropriately. 
%\end{remark}

%\begin{proposition}\label{proposition:diff-eq-power-series} 
%Write our differential equation as power series. 
%
%Write more here! 
%\color{black}
%\end{proposition}

\section{Orthogonality in the $DJKM$ setting.}\label{section:Date-Jimbo-Kashiwara-Miwa}  
\begin{comment}
Check if $u_n$'s are orthonormal: 
\begin{theorem}[Favard]\label{theorem:orthonormal-polynomials}
Let $p_n$ be a family of polynomials of degree $n$ satisfying the recursion: 
\[ 
	xp_n = a_{n+1}p_{n+1}+b_np_n+c_{n-1}p_{n-1}, 
\] 
where $a_i, b_j, c_k\in \mathbb{C}$. Then $\{ p_n: n\geq 0\}$ is an orthonormal family of polynomials with respect to some moment functional if $c_{n-1}=\overline{a}_n$ for all $n$. 
\end{theorem} 
\end{comment} 
The recursion relation for Chebyshev polynomials $T_n=T_n(t)$ of the first kind is the following: 
\[
2tT_n=T_{n+1}+T_{n-1}, 
\]
with initial condition $T_0=1$ and $T_1=x$, and it is known that these Chebyshev polynomials are orthogonal with respect to the kernel 
\begin{equation}\label{kernel:firstkind}
\frac{1}{\sqrt{1-t^2}}.
\end{equation}
Similarly, the recursion relation for Chebyshev polynomials $U_n=U_n(t)$ of the second kind is: 
\[
2tU_n=U_{n+1}+U_{n-1}, 
\]
with initial condition $U_0=1$ and $U_1=2x$, and it is known that these Chebyshev polynomials are orthogonal with respect to the kernel 
\begin{equation}\label{kernel:secondkind}
\sqrt{1-t^2}.
\end{equation}

One may recall Favard's Theorem 
(see \cite{favard} and \cite{MR0481884})\color{black},
which states  given a family of polynomials $p_n$, $n\geq 0$ with $p_0=1$, $p_n$ having degree $n$, and satisfying a three term recurrence relation of the form 
$$
p_{n+1}=(t-c_n)p_n-d_np_{n-1}
$$
where $c_n$ and $d_n$ are complex numbers, then the $p_n$ form a sequence of polynomials that are orthogonal with respect to some linear functional $L$ with $L(1)=1$ and $L(p_mp_n)=\delta_{m,n}$. In the setting of $DJKM$-algebras, we can rewrite the recursion relation of the $b_n$'s, \eqref{bnrecursion}, as
\begin{equation}\label{modifiedbnrecurrence}
2 (a_1/t)t^{-n}b_n = t^{-n-1}b_{n+1}+t^{-n+1}b_{n-1}, 
\end{equation} 
which suggests that the Laurent polynomials $t^{-n}b_n$ are orthogonal with respect to some measure if we view them as functions of $a_1/t$.  

Indeed, we prove that this is true in Theorem~\ref{secondmaintheorem} and Corollary~\ref{cor:kernels-orthogonal-system}. 
\color{black}

Consider the polynomial in \eqref{eq:polynomial-DJKM} studied by Date-Jimbo-Kashiwara-Miwa in \cite{MR701334} and \cite{MR823315}, and note that $p(t)=q(t)^2-1$, where $q(t) = \dfrac{t^2-\beta}{\sqrt{\beta^2-1}}$ and 
$p_{-\beta}(it)=p_{\beta}(t)$. 
% no need to write the above three times. 

This gives us the explicit form: 
\begin{align*}
b_n &=
	\dfrac{1}{2}\sqrt{\dfrac{\beta+1}{2}}
		\sqrt{\dfrac{2(\beta-1)}{t^4-2\beta t^2+1} }\cdot  \\ 
			&\cdot\left( 
				\left( 
					\dfrac{t^2-1}{\sqrt{2(\beta-1)}} 
						+ 
					\sqrt{\dfrac{t^4-2\beta t^2+1}{2(\beta-1)} } 
				\right)^{n+1} - 
				\left( 
					\dfrac{t^2-1}{\sqrt{2(\beta-1)}} 
						- 
					\sqrt{\dfrac{t^4-2\beta t^2+1}{2(\beta-1)} } 
				\right)^{n+1}  
			\right).   
\end{align*}
From the recursion relation \eqref{bnrecursion}, we see that $b_n$ is a polynomial in $t$ of degree $2n$.

\begin{theorem} \label{secondmaintheorem}
For $\beta>1$ real and
\[ 
p(t) =\dfrac{t^4-2\beta t^2+1}{\beta^2-1},  \hspace{6mm} a_1(t) := \dfrac{t^2-1}{\sqrt{2(\beta-1)}}, \hspace{3mm}
\mbox{and} 
\hspace{3mm}
b_0 = \sqrt{\dfrac{\beta+1}{2} }, 
\]
we have the following identities: 
\begin{align}
\int_{t= \frac{\sqrt{\beta
   +1}-\sqrt{\beta -1}}{\sqrt{2}}}^{\frac{\sqrt{\beta
   -1}+\sqrt{\beta +1}}{\sqrt{2}}}t^{-n-m-1}a_na_m (t^2+1) \sqrt{\frac{1-\beta }{t^4-2 \beta  t^2+1}}dt   
	= 
	\begin{cases}
 		  	0			& \mbox{ if } n\not=m, \\ 
		\pi \sqrt{\beta -1}		& \mbox{ if } n =m=0, \\ 
		(\pi /2)\sqrt{\beta -1}	& \mbox{ if } n =m\neq 0, 
	\end{cases} \label{integral1}
\end{align}
and
\begin{align}
\int_{t= \frac{\sqrt{\beta
   +1}-\sqrt{\beta -1}}{\sqrt{2}}}^{\frac{\sqrt{\beta
   -1}+\sqrt{\beta +1}}{\sqrt{2}}}t^{-n-m-3}b_mb_n(t^2+1) \sqrt{\frac{t^4-2 \beta  t^2+1}{1-\beta }}dt   = 
	\begin{cases}
 		  	0	& \mbox{ if } n\not=m, \\ 
		\frac{\pi}{2} (\beta+1)\sqrt{\beta-1}	& \mbox{ if } n =m. 
	\end{cases} \label{integral2}
\end{align}
\end{theorem}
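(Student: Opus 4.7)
\medskip

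\noindent\textbf{Proof proposal.} The plan is to reduce both integrals to the classical Chebyshev orthogonality relations via the substitution $x = a_1(t)/t$. The key observation is that in the DJKM setting the hypothesis $a_1^2 - b_0^2 p = t^{2r}$ of Lemma~\ref{prop:hypergeometric-function} is satisfied with $r = 1$. Indeed, a direct computation gives
\begin{equation*}
a_1^2 - b_0^2 p = \frac{(t^2-1)^2}{2(\beta-1)} - \frac{\beta+1}{2} \cdot \frac{t^4 - 2\beta t^2 + 1}{\beta^2-1} = \frac{2(\beta-1)t^2}{2(\beta-1)} = t^2.
\end{equation*}
Hence by Lemma~\ref{prop:hypergeometric-function} we have $t^{-n} a_n = T_n(a_1/t)$ and $t^{-n} b_n = b_0\, U_n(a_1/t)$, so the Laurent polynomials in question are literally Chebyshev polynomials in the variable $x = a_1(t)/t = (t^2-1)/(t\sqrt{2(\beta-1)})$.

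\medskip

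Next I would justify the change of variables. Setting $x = (t^2-1)/(t\sqrt{2(\beta-1)})$ yields
\begin{equation*}
\frac{dx}{dt} = \frac{t^2+1}{t^2\sqrt{2(\beta-1)}}, \qquad 1 - x^2 = \frac{t^4 - 2\beta t^2 + 1}{2t^2(1-\beta)}.
\end{equation*}
Solving $x = \pm 1$ for $t > 0$ gives the two endpoints $t = (\sqrt{\beta+1} \mp \sqrt{\beta-1})/\sqrt{2}$ appearing in the statement, and $x(t)$ is strictly increasing on this interval (since $dx/dt > 0$), mapping it bijectively onto $[-1,1]$. For $\beta > 1$, the quartic $t^4 - 2\beta t^2 + 1$ is negative on this interval (its roots in $t^2$ are $\beta \pm \sqrt{\beta^2-1}$), so $1-x^2 > 0$ there and all radicals below are real.

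\medskip

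With these ingredients, both identities reduce to routine bookkeeping. For the first integral, I would compute
\begin{equation*}
\frac{dx}{\sqrt{1-x^2}} = \frac{t^2+1}{t\sqrt{\beta-1}} \sqrt{\frac{1-\beta}{t^4 - 2\beta t^2 + 1}}\, dt,
\end{equation*}
so that $t^{-n-m-1} a_n a_m (t^2+1) \sqrt{(1-\beta)/(t^4-2\beta t^2+1)}\, dt = \sqrt{\beta-1}\, T_n(x) T_m(x)\, dx/\sqrt{1-x^2}$. The classical identity
\begin{equation*}
\int_{-1}^{1} T_n(x) T_m(x) \frac{dx}{\sqrt{1-x^2}} = \begin{cases} \pi & n=m=0, \\ \pi/2 & n = m \neq 0, \\ 0 & n \neq m, \end{cases}
\end{equation*}
then gives \eqref{integral1} after multiplying by $\sqrt{\beta-1}$. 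For the second integral, I would use
\begin{equation*}
\sqrt{1-x^2}\, dx = \frac{t^2+1}{2t^3 \sqrt{\beta-1}} \sqrt{\frac{t^4 - 2\beta t^2 + 1}{1-\beta}}\, dt,
\end{equation*}
which together with $t^{-n} b_n = b_0 U_n(a_1/t)$ and $b_0^2 = (\beta+1)/2$ shows that the integrand in \eqref{integral2} equals $2 b_0^2 \sqrt{\beta-1}\, U_n(x) U_m(x) \sqrt{1-x^2}\, dx$. Invoking the standard orthogonality $\int_{-1}^{1} U_n(x) U_m(x) \sqrt{1-x^2}\, dx = (\pi/2)\delta_{nm}$ and simplifying the constant yields \eqref{integral2}.

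\medskip

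The main obstacle is bookkeeping rather than conceptual: one must track signs carefully when writing $\sqrt{1-x^2}$ in terms of $t$ (since $t^4 - 2\beta t^2 + 1$ and $1-\beta$ are both negative on the integration interval, their ratios under square roots must be written in a form that is manifestly nonnegative) and confirm that $x(t)$ is monotone on the chosen branch so that the endpoints match $x = \pm 1$ in the correct orientation. Once this is done, the orthogonality follows immediately from the classical Chebyshev case, and the explicit normalization constants produce the stated $\pi\sqrt{\beta-1}$ and $(\pi/2)(\beta+1)\sqrt{\beta-1}$ values.
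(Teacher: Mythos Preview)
Your proposal is correct and follows essentially the same approach as the paper's own proof: both apply Lemma~\ref{prop:hypergeometric-function} with $r=1$ to write $t^{-n}a_n=T_n(a_1/t)$ and $t^{-n}b_n=b_0U_n(a_1/t)$, then perform the substitution $z=a_1(t)/t$ to reduce the integrals to the classical Chebyshev orthogonality relations. Your write-up is in fact more careful than the paper's, since you explicitly verify $a_1^2-b_0^2p=t^2$, check monotonicity of the substitution, and track the sign conventions needed to make the radicals real on the integration interval.
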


\begin{proof}
We know 
\begin{align*}
\int_{-1}^{1}T_n(z)T_m(z)\frac{1}{\sqrt{1-z^2}} dz& = 
	\begin{cases}
 		  	0	& \mbox{ if } n\not=m, \\ 
		\pi 	& \mbox{ if } n =m=0, \\ 
		\pi /2	& \mbox{ if } n =m\neq 0, 
	\end{cases} \\ \\
\int_{-1}^{1}U_n(z)U_m(z)\sqrt{1-z^2}dz &= 
	\begin{cases}
 		  	0	& \mbox{ if } n\not=m, \\ 
		\pi /2	& \mbox{ if } n =m. 
	\end{cases}
\end{align*}
Then setting $z=a_1(t)/t$, we get 
\begin{align*}
 T_n(a_1/t)T_m(a_1/t)\frac{1}{\sqrt{1-(a_1/t)^2}} d(a_1/t)& = t^{-n-m}a_na_m\frac{|t|}{\sqrt{t^2-a_1^2}} (ta_1'-a_1)t^{-2}dt 
 \end{align*}
by Lemma~\ref{prop:hypergeometric-function}. 
Solving $a_1/t=-1$,  
we have two solutions:  
$\displaystyle{t=-\frac{\sqrt{\beta -1}+\sqrt{\beta+1}}{\sqrt{2}}}$, $\displaystyle{\frac{\sqrt{\beta
   +1}-\sqrt{\beta -1}}{\sqrt{2}}}$ 
and for  $a_1/t=1$, $\displaystyle{t=\frac{\sqrt{\beta -1}-\sqrt{\beta+1}}{\sqrt{2}}}$, 
  $\displaystyle{\frac{\sqrt{\beta
   -1}+\sqrt{\beta +1}}{\sqrt{2}}}$.  For $\beta>1$ real, we have 
   \begin{align*}
  \int_{t= \frac{\sqrt{\beta
   +1}-\sqrt{\beta -1}}{\sqrt{2}}}^{\frac{\sqrt{\beta
   -1}+\sqrt{\beta +1}}{\sqrt{2}}} &t^{-n-m-1}a_n a_m\frac{t^2+1}{\sqrt{\beta -1} \sqrt{\frac{t^4-2 \beta  t^2+1}{1-\beta
   }}}dt  \\ 
   &= \int_{-1}^{1}T_n(z)T_m(z)\frac{1}{\sqrt{1-z^2}} dz
   = 
	\begin{cases}
 		  	0	& \mbox{ if } n\not=m, \\ 
		\pi 	& \mbox{ if } n =m=0, \\ 
		\pi /2	& \mbox{ if } n =m\neq 0. 
	\end{cases} 
\end{align*}
Similarly for the $b_n$'s, 
we have 
\begin{align*}
 &\int_{t= \frac{\sqrt{\beta
   +1}-\sqrt{\beta -1}}{\sqrt{2}}}^{\frac{\sqrt{\beta
   -1}+\sqrt{\beta +1}}{\sqrt{2}}} 
	\dfrac{b_0^{-2}}{2}  t^{-n-m-3}b_nb_m
	\sqrt{\frac{t^4-2 \beta  t^2+1}{1-\beta   }}
	\frac{(t^2+1)}{\sqrt{\beta-1}} 
dt\\
 &=\int_{t= \frac{\sqrt{\beta
   +1}-\sqrt{\beta -1}}{\sqrt{2}}}^{\frac{\sqrt{\beta
   -1}+\sqrt{\beta +1}}{\sqrt{2}}}
	b_0^{-2}  t^{-n-m}b_nb_m\frac{\sqrt{t^2-a_1^2}}{|t|} (ta_1'-a_1)t^{-2}dt  	\\ 
&=\int_{-1}^{1}U_n(z)U_m(z)\sqrt{1-z^2}dz 		\\
&= 
	\begin{cases}
 		  	0	& \mbox{ if } n\not=m, 	\\ 
		\pi /2	& \mbox{ if } n =m.  
	\end{cases}
\end{align*}
\end{proof}

\begin{corollary}\label{cor:kernels-orthogonal-system}  
For $\beta>1$ and $n\geq 0$, the Laurent polynomials $t^{-n}a_n$, respectively, $t^{-n}b_n$,  
	form an orthogonal family on the interval $\displaystyle{ \left[ \frac{\sqrt{\beta
   +1}-\sqrt{\beta -1}}{\sqrt{2}}, \frac{\sqrt{\beta
   -1}+\sqrt{\beta +1}}{\sqrt{2}}\right]}$ with respect to the kernels 
   \begin{align*}
t^{-1}(t^2+1) \sqrt{\frac{1-\beta }{t^4-2 \beta  t^2+1}},\quad \text{ respectively, }
\quad t^{-3}(t^2+1) \sqrt{\frac{t^4-2 \beta  t^2+1}{1-\beta }}.
   \end{align*}
\end{corollary}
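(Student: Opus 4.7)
The plan is to observe that this corollary is essentially a direct repackaging of Theorem \ref{secondmaintheorem} in the language of orthogonal Laurent polynomials, so the proof amounts to matching integrands and verifying that the kernels are well-defined real-valued functions on the interval in question.

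First, I would write out the inner product explicitly. For the $a_n$ case, multiplying the two Laurent polynomials $t^{-n}a_n$ and $t^{-m}a_m$ against the proposed kernel gives
\[
(t^{-n}a_n)(t^{-m}a_m)\cdot t^{-1}(t^2+1)\sqrt{\frac{1-\beta}{t^4-2\beta t^2+1}}
= t^{-n-m-1}a_n a_m (t^2+1)\sqrt{\frac{1-\beta}{t^4-2\beta t^2+1}},
\]
which is precisely the integrand of \eqref{integral1}. An identical computation shows that $(t^{-n}b_n)(t^{-m}b_m)$ times the kernel $t^{-3}(t^2+1)\sqrt{(t^4-2\beta t^2+1)/(1-\beta)}$ reproduces the integrand of \eqref{integral2}. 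Thus the orthogonality claim follows immediately from the vanishing-for-$n\neq m$ and nonvanishing-for-$n=m$ conclusions of Theorem \ref{secondmaintheorem}.

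The one small thing to verify, which I would do up front, is that the kernels make sense as real functions on $I=\left[\frac{\sqrt{\beta+1}-\sqrt{\beta-1}}{\sqrt{2}},\frac{\sqrt{\beta-1}+\sqrt{\beta+1}}{\sqrt{2}}\right]$. The endpoints of $I$ are exactly the positive roots of $q(t):=t^4-2\beta t^2+1$ (one checks by squaring), so $q$ has constant sign on the interior of $I$; evaluating at $t^2=\beta$ (which lies in $I$ since $\beta>1$) gives $q=1-\beta^2<0$. Hence $q(t)<0$ on $I^{\circ}$, and since $1-\beta<0$ as well, the ratio $(1-\beta)/q(t)>0$ throughout $I^{\circ}$. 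Both square roots are therefore real and positive, and the kernels are genuine (nonnegative) weight functions.

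There is essentially no obstacle here — the corollary is a linguistic reformulation of the theorem. The only potential pitfall is a sign error in the weight that would render it imaginary; the sign check above dispatches this cleanly, and no further argument is needed to conclude that $\{t^{-n}a_n\}_{n\geq 0}$ and $\{t^{-n}b_n\}_{n\geq 0}$ are orthogonal families on $I$ with respect to the displayed kernels.
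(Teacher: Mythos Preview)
Your proposal is correct and matches the paper's approach: the corollary is stated immediately after Theorem~\ref{secondmaintheorem} with no separate proof, as it is indeed just a repackaging of the orthogonality identities \eqref{integral1} and \eqref{integral2}. Your additional verification that the kernels are real-valued on the interior of $I$ is a nice sanity check that the paper leaves implicit.
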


\begin{remark}  If $m$ is even and $n$ is odd, or vice versa, then the integrals in \eqref{integral1} and \eqref{integral2} are elliptic integrals.
\end{remark}

\begin{remark}  One can also interpret the orthogonality relations \eqref{integral1} and \eqref{integral2} as an orthogonality relation between pairs of polynomials whereby the kernel depends on the degree of the polynomials.  We thank one of the referees for pointing this out to the authors.   
\end{remark}
\color{black}

Lastly we have

\begin{proposition}\label{lemma}\label{lemma:Shabat-poly}
The $b_n$'s have extrema at the endpoints of the interval $|t|\leq 1$, analogous to the property of Shabat polynomials: for $t=\pm 1$, we have 
\begin{align*}
b_n(1) = b_n(-1) = 
		\dfrac{\left(   -1  \right)^{n/2  }}{2} 
		\sqrt{\dfrac{\beta+1}{2}} 
			\left( 
				1+(-1)^n 
			\right) 
 		= 	\begin{cases}
				(-1)^{n/2}\sqrt{\dfrac{\beta+1}{2}} 	&	\mbox{ if } n \mbox{ even},  \\ 
				\hspace{12mm} 0 	&	\mbox{ if } n \mbox{ odd}. 
			\end{cases}  
\end{align*}
\end{proposition}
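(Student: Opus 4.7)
My approach is direct substitution into the closed form \eqref{equation:explicit-form-bn} for $b_n$, exploiting the fact that $t=\pm 1$ are roots of $a_1$. Since the problem is symmetric in $t \mapsto -t$ (because $a_1$ and $p$ are even polynomials in $t$), it suffices to compute $b_n(1)$; the equality $b_n(1)=b_n(-1)$ then follows from the fact that $b_n$ is itself an even polynomial in $t$, which in turn can be seen from \eqref{equation:explicit-form-bn}.

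First I would observe that $a_1(1)=\frac{1-1}{\sqrt{2(\beta-1)}}=0$, and compute $p(1)=\frac{1-2\beta+1}{\beta^2-1}=\frac{-2}{\beta+1}$. Then choosing a branch of the square root, $\sqrt{p(1)}=i\sqrt{\frac{2}{\beta+1}}$, so the key identity
\[
b_0\sqrt{p(1)}=\sqrt{\tfrac{\beta+1}{2}}\cdot i\sqrt{\tfrac{2}{\beta+1}}=i
\]
falls out. (The opposite choice of branch leads to $-i$, but this is consistent with the symmetry $b_n=b_n(-u)$ in the formula.) Substituting into the explicit expression
\[
b_n(1)=\frac{1}{2\sqrt{p(1)}}\Bigl((a_1(1)+b_0\sqrt{p(1)})^{n+1}-(a_1(1)-b_0\sqrt{p(1)})^{n+1}\Bigr)
=\frac{i^{n+1}-(-i)^{n+1}}{2\sqrt{p(1)}},
\]
and then using $\sqrt{p(1)}=i/b_0$ to clear the denominator yields
\[
b_n(1)=\frac{b_0\, i^{n+1}\bigl(1-(-1)^{n+1}\bigr)}{2i}=\frac{b_0\, i^n\bigl(1+(-1)^n\bigr)}{2}.
\]

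Finally I would split on the parity of $n$: if $n$ is odd then $1+(-1)^n=0$ and $b_n(1)=0$; if $n=2k$ is even then $i^n=(-1)^k=(-1)^{n/2}$ and $1+(-1)^n=2$, giving $b_n(1)=(-1)^{n/2}b_0=(-1)^{n/2}\sqrt{\frac{\beta+1}{2}}$. This is precisely the claimed closed form. There is really no obstacle here beyond bookkeeping with the branch of $\sqrt{p}$: the substitution $a_1(\pm 1)=0$ collapses the two terms of \eqref{equation:explicit-form-bn} into a single power of the purely imaginary quantity $b_0\sqrt{p(\pm 1)}=\pm i$, after which the parity dichotomy is automatic.
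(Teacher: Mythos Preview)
Your proof is correct and follows essentially the same approach as the paper: direct substitution of $t=1$ into the closed form \eqref{equation:explicit-form-bn}, using $a_1(1)=0$ so that only the $\pm b_0\sqrt{p(1)}$ terms survive, and then reducing to a parity argument on powers of $i$. Your presentation is slightly tidier in that you isolate the identity $b_0\sqrt{p(1)}=i$ explicitly, whereas the paper carries the expression $\sqrt{(\beta-1)/(-(\beta-1))}$ through the computation, but the underlying argument is identical.
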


\begin{proof} 
We have 
\[ 
\begin{aligned} 
b_n(1) &= \dfrac{1}{2} \sqrt{\dfrac{\beta+1}{2}}
			\left( 
				\left( \sqrt{\dfrac{\beta-1}{-(\beta-1)}}
				\right)^n + 
				\left(   
					- \sqrt{
						\dfrac{\beta-1}{-(\beta-1)} 	
						}
				\right)^n  
			\right) \\ 
	&= \dfrac{1}{2}\sqrt{\dfrac{\beta+1}{2}}
		\left( 
			(-1)^{n/2} 	+ (-1)^n(-1)^{n/2}
		\right) 
\\
	&= 		\dfrac{\left(   -1  \right)^{n/2  }}{2} 
		\sqrt{\dfrac{\beta+1}{2}} 
			\left( 
				1+(-1)^n 
			\right),   \\ 
\end{aligned}   
\]  
so the result holds. 
Similarly, since $b_n(-1)=b_n(1)$, we are done. 
\end{proof}  

\section{Future work}\label{section:future-work}
Since the families $a_n$ and $b_n$ of polynomials are intimately related to Chebyshev polynomials, 
it is natural to study the fullest extent of their analogs. In Section~\ref{section:Date-Jimbo-Kashiwara-Miwa}, we specialized to the $DJKM$-algebra setting to obtain an orthogonality result for $a_n$ and $b_n$.  
Our future work includes generalizing the orthogonality to other polynomials $p$, not necessarily of degree 4, whereby one will most likely obtain new identities in terms of hyperelliptic integrals.

\subsection{Acknowledgements}
Both authors thank the Department of Mathematics at the University of California at Santa Cruz for providing 
conducive work environment during the initial stages of this paper. The first author is partially supported by Simons Collaboration Grant \#319261 and the second author is supported by NSF-AWM Mentoring Grant.   We would also like to thank the referees for providing us with useful references and suggestions for improving the exposition of this paper.

\bibliographystyle{gITR}

\appendix
\def\cprime{$'$} \def\cprime{$'$} \def\cprime{$'$} \def\cprime{$'$}

\end{document}